\DeclareMathOperator{\tr}{tr}
\DeclareMathOperator{\cov}{cov}
\begin{document}

\sloppy
%
\title{Secret Key Generation from Vector Gaussian Sources with Public and Private Communications}


\author{Yinfei~Xu,~\IEEEmembership{Member,~IEEE,}
  and~Daming~Cao
\thanks{Y. Xu is with the School of Information Science and Engineering, Southeast University, Nanjing, 210096, China (email: yinfeixu@seu.edu.cn).}
\thanks{D. Cao is with the Department of Computer Science, National University of Singapore, Singapore (email: dcscaod@nus.edu.sg).}
\thanks{This paper was presented in part at the 2017 9th International Conference on Wireless Communications and Signal Processing \cite{CK17}, and is accepted in part by the 2020 IEEE International Symposium on Information Theory.}
}

\maketitle

\begin{abstract}
In this paper, we consider the problem of secret key generation with one-way communication through both a rate-limited public channel and a rate-limited secure channels where the public channel is from Alice to Bob and Eve and the secure channel is from Alice to Bob. In this model, we do not pose any constraints on the sources, i.e. Bob is not degraded to or less noisy than Eve.  We obtain the optimal secret key rate in this problem, both for the discrete memoryless sources and vector Gaussian sources. The vector Gaussian characterization is derived by suitably applying the enhancement argument, and Proving a new extremal inequality. The extremal inequality can be seen as coupling of two extremal inequalities, which are related to the degraded compound MIMO Gaussian broadcast channel, and the vector generalization of Costa's entropy power inequality, accordingly.
\end{abstract}

\begin{IEEEkeywords}
Correlated sources, entropy power inequality, extremal inequality, information-theoretic security, secret key generation, vector Gaussian sources.
\end{IEEEkeywords}

\newtheorem{theorem}{Theorem}
\newtheorem{lemma}{Lemma}
\newtheorem{definition}{Definition}
\newtheorem{remark}{Remark}
\newtheorem{example}{Example}

%

\section{Introduction}

The problem of secret key generation was introduced by Ahlswede and Csisz{\'a}r \cite{ahlswede1993common}, and by Maurer \cite{maurer1993secret}, where two separate terminals, named Alice and Bob, observe the outcomes of a pair of correlated sources separately and want to generate a common secret key, which is concealed from an eavesdropper Eve, given that the terminals can communicate through a noiseless public channel which the eavesdropper has complete access to. In \cite{ahlswede1993common}, the secret key capacity of correlated sources was characterized when Alice and Bob are allowed to communicate once over a channel with unlimited capacity. The secrecy key capacity was found by Csisz{\'a}r and Narayan \cite{csiszar2000common} when there is a constraint on the rate of the public channel.\par

From then on, the secret key generation from correlated sources problem has been attracted considerable attention, both in limited rate constraints setting\cite{GA10,CDS12,CB14,TBS17,LCV17}, and unlimited rate constraints setting \cite{NN10,T13,ZLLW14,CZ14,CMZ18,CY19}. However, for many models of interest in practise, the key capacity problem still remains unsolved. In \cite{WO11}, the secret key generation problem through rate limited noiseless public channel was extended to source with continuous alphabets. The fundamental limits for vector Gaussian sources, which are natural models of multiple input multiple output (MIMO) systems, was characterized. In \cite{LCV}, a water filling solution was further derived for the product vector Gaussian sources.\par

In this paper, we consider the problem of secret key generation with one-way communication, where in addition to the rate limited public channel, which can be observed by both Bob and Eve, we add a secure channel, which only connects Alice and Bob. One of the motivations of this problem comes from wireless sensor network with fading channels, where the nodes want to share a secret key to encrypt their communication. In this scenario, the frequency selectivity of the fading channels will create both public and secure channels. More specifically, in some frequency bands, the links from Alice to both Bob and Eve are of good qualities, which constitute the public channel. In some other frequency bands, the link from Alice to Bob is of good quality, but the link from Alice to Eve is basically broken. These frequency bands can be viewed as a secure channel. Another motivating example comes from \cite{thai2014physical}, where Alice and Bob are nodes equipped with multiple antennas and they want to communicate to share a secret key with the help of multiple single-antenna relays employing the amplify-and-forward strategy. We assume that some relays are ``nice but curious''\cite{lima2007random} which can be viewed as Eve, while the other relays are simply nice. Therefore, the links through the curious relays are public, while the links through the nice relays are secure.\par

Our problem can be viewed as a special case of the problem of secret key generation from correlated sources with the broadcast channel introduced in \cite{prabhakaran2008secrecy}, where Alice, Bob and Eve are connected by a one-way broadcast channel and they observe the outcomes of correlated sources separately which are independent of the channel. This problem in general is very difficult, because it is hard to identify the optimal strategies to combine the two resources, the channel and the sources, to generate a secret key between Alice and Bob. Therefore, the secret key capacity in this problem remains unknown in its general form. Achievability schemes  and converses have been proposed in \cite{prabhakaran2012secrecy,bassi2016secretArXiv}. However, these achievabilities and converses in general do not match.\par

For the vector Gaussian sources problem, one of difficulties to show the fundamental tradeoff between the key capacity and the communication constraints is that vector Gaussian sources are not in general degraded. This difficulty frequently appears in several vector Gaussian multi-terminal problem \cite{MIMO,exinq,Liu09,Liu10,EU13}. In \cite{WO11}, Watanabe and Oohama circumvent this difficulty by suitably applying the enhancement argument. Further invoking the so-called Costa's type extremal inequality \cite{Liu10}, they showed that one Gaussian auxiliary random variable suffices to characterise the rate region. However, for the private and public communication available in our setting, it will been seen that single auxiliary random variable fails to characterize the tradeoff between the key capacity and communication rate. As a consequence, applying Costa's type extremal inequality alone is not sufficient when the public communication constraint is considered. The desired converse result can be eventually obtained by a suitable integration of the classical enhancement argument. With the enhanced source model, the corresponding extremal inequality should be decoupled into two enhanced extremal inequalities, in which one is related to the degraded compound MIMO Gaussian broadcast channel in \cite{WLSSV09,CL14}, and the other one is the vector generalization of Costa's entropy power inequality in \cite{Liu10,WO11}.


The rest of this paper is organized as follows. The problem setup is given in Section II, we first derive the optimal achievable rate region for the case of discrete memoryless sources case, and then we show the rate region characterization for the case of vector Gaussian sources considered in this paper. The achievablility and converse proof for the discrete memoryless sources are shown in Section III.
Section IV is devoted to proof our new extremal inequality, and we show the Gaussian auxiliary random variables suffice to achieve the optimal rate region, for the vector Gaussian sources. In Section V, we conclude with a summary of our results and a remark on future research.

\section{Problem Statement and Main Result}

\subsection{Discrete Memoryless Sources}
Consider a network with three nodes, including a transmitter Alice, a receiver Bob and an eavesdropper Eve. We assume three discrete memoryless sources indicated by random variables $(X,Y,Z)$, defined in the alphabets $(\mathcal{X},\mathcal{Y},\mathcal{Z})$, respectively.  We assume that Alice and Bob observe the $n$-length source sequences $X^n$ and $Y^n$, respectively, and Eve observes $n$-length source sequence $Z^n$. In order to generate a secret key $K$, which is shared by Alice and Bob and concealed from Eve, Alice can send two messages $M_1$ and $M_2$, where $M_1$ is through a noiseless public channel, which can be observed by Eve, and $M_2$ is through a noiseless secure channel, to which Eve has no access.

A $(2^{nR_{1}}, 2^{nR_{2}}, n)$ code consists of
\begin{itemize}
 \item a public encoding function $\phi_{1}: \mathcal{X}^{ n} \mapsto \mathcal{M}_{1}^{n} = [1: 2^{nR_1}]$ that finds a codeword $m_{1}(x^{n})$ to each $n$-length source sequence $x^{n}$, and sends it to both Bob and Eve,
 \item a private encoding function $\phi_{2}: \mathcal{X}^{ n} \mapsto \mathcal{M}_{2}^{n} = [1:2^{nR_2}]$ that finds a codeword $m_{2}(x^{n})$ to each $n$-length source sequence $x^{n}$, and sends it to both Bob only,
  \item a key generation function $\psi_{1}: \mathcal{X}^{n} \mapsto \mathcal{K}^{n} = [1: 2^{nR_K}]$ that assigns a random mapping $k_{1}(x^{n})$ by giving Alice's $n$-length source sequence $x^{n}$,
  \item a key generation function $\psi_{2}: \mathcal{Y}^{n} \times \mathcal{M}_{1}^{n} \times \mathcal{M}_{2}^{n} \mapsto \mathcal{K}^{n}= [1: 2^{nR_K}]$ that assigns a random mapping $k_{2}(y^{n}, m_{1}, m_{2})$ by giving Bob's $n$-length source sequence $y^{n}$ and all received indices $m_{1}$ and $m_{2}$.
\end{itemize}

Then the secret key is generated by Alice and Bob from the functions $\psi_{1}$ and $\psi_{2}$, respectively, which should agree with probability $1$ and be concealed from Eve. The probability of error for the key generation code is defined as
\begin{equation}
P_e^{(n)}=\Pr\{  K_{1} \neq K_{2}      \}.
\end{equation}
The \emph{key leakage rate} at Eve is defined as
\begin{equation}
R_L^{(n)}=\max_{j \in \{1,2\}}\frac{1}{n}I(K_{j};Z^n, M_1).
\end{equation}

\begin{definition}
A secret key rate $R_K$ with constraint communication rate pair $(R_1,R_2)$ is achievable if there exists a sequence of $(2^{nR_{1}}, 2^{nR_{2}}, n)$ code such that
\begin{align}
\lim_{n \rightarrow \infty} P_e^{(n)} =0,\\
\lim_{n \rightarrow \infty} R_{L}^{(n)}=0.\label{SMSC1}
\end{align}
\end{definition}

For the discrete memoryless source setting, we have the following single-letter expression on the largest achievable secret key rate $R_{K}$ with public and private communication constraints $R_{1}$ and $R_{2}$.
\begin{theorem}\label{mainth}
For the discrete memoryless source secret key generation problem with public and private communication constraints, the rate tuple $(R_K,R_1,R_2)$ is achievable if and only if
\begin{align}
   R_K-R_{2} & \leq I(U;Y|V)-I(U;Z|V),\label{TH1}\\
  R_1+R_2 &\ge I(U;X|Y), \label{TH2}  \\
  R_1&\ge I(V;X|Y), \label{TH11}
\end{align}
where random variables $(V,U,X,Y,Z)$ satisfy the following Markov chain
\begin{align}
  V \rightarrow U &\rightarrow X \rightarrow (Y,Z).
\end{align}
\end{theorem}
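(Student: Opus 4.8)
The plan is to prove achievability and the converse separately, handling the cardinality bounds on $U$ and $V$ by a final application of the support lemma. For achievability I would use a two-layer Wyner--Ziv scheme: fix a test channel $V\to U\to X\to(Y,Z)$, generate a $V^n$-codebook of rate $\approx I(V;X)$ and, superimposed on it, a $U^n$-codebook of conditional rate $\approx I(U;X|V)$, and let Alice pick a jointly typical triple $(V^n,U^n,X^n)$. The $V^n$ Wyner--Ziv bin index (rate $I(V;X|Y)$) goes over the public channel, so the coarse description $V$ becomes common knowledge to Bob and Eve — advantageous precisely when a source component is more useful to Eve than to Bob — and the conditional $U^n$-bin index (rate $I(U;X|V,Y)$) is split, part over the remaining public budget and part over the private channel, with a fresh uniform one-time pad filling whatever private rate is left. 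Bob reconstructs $(V^n,U^n)$ by successive Wyner--Ziv decoding, so $P_e^{(n)}\to0$ whenever $R_1\ge I(V;X|Y)$ and $R_1+R_2\ge I(U;X|Y)$. The key is assembled from three pieces — a privacy-amplified ``secret column'' of $U^n$ of rate $I(U;Y|V)-I(U;Z|V)$, the $U$-refinement bits that travelled privately, and the one-time pad — whose rates add up to exactly $R_K=I(U;Y|V)-I(U;Z|V)+R_2$. The real work in this part is the leakage analysis: the pad is independent of Eve's observation by construction, while a balanced-coloring / privacy-amplification argument in the style of Csisz\'ar--Narayan shows the remaining two pieces are asymptotically uniform and independent of $(Z^n,M_1)$, so that $R_L^{(n)}\to0$.

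For the converse, given a sequence of good codes I would use Fano's inequality $H(K_1|Y^n,M_1,M_2)\le n\epsilon_n$ and the leakage bound $I(K_1;Z^n,M_1)\le n\epsilon_n$, and assume the customary near-uniformity $H(K_1)\ge nR_K-n\epsilon_n$. I would take $V_i:=(M_1,Y^{i-1},Z_{i+1}^n)$ and $U_i:=(K_1,V_i)$; the Markov chain $V_i\to U_i\to X_i\to(Y_i,Z_i)$ is immediate from memorylessness. For \eqref{TH1} I would first bound $nR_K-nR_2\le I(K_1;Y^n|M_1)-I(K_1;Z^n|M_1)+n\epsilon_n'$, using $I(K_1;Z^n|M_1)\le I(K_1;Z^n,M_1)\le n\epsilon_n$ on one side and $I(K_1;Y^n|M_1)\ge H(K_1)-H(M_2)-n\epsilon_n\ge nR_K-nR_2-n\epsilon_n'$ on the other (the latter since $H(K_1|M_1,Y^n)\le H(M_2)+H(K_1|M_1,M_2,Y^n)$, Fano, and $I(K_1;M_1)\le n\epsilon_n$), and then apply the Csisz\'ar sum identity (the Csisz\'ar--K\"orner wiretap decomposition) to write $I(K_1;Y^n|M_1)-I(K_1;Z^n|M_1)=\sum_i\big[I(K_1;Y_i|V_i)-I(K_1;Z_i|V_i)\big]=\sum_i\big[I(U_i;Y_i|V_i)-I(U_i;Z_i|V_i)\big]$.

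For the communication bounds, $nR_1\ge H(M_1|Y^n)=I(M_1;X^n|Y^n)$ and $n(R_1+R_2)\ge H(M_1,K_1|Y^n)-n\epsilon_n\ge I(M_1,K_1;X^n|Y^n)-n\epsilon_n$. To single-letterize these I would use, for $F=M_1$ and then $F=(M_1,K_1)$: (i) $I(F,Y^{i-1},Z_{i+1}^n;X_i|Y_i)=I(F;X_i|Y^i,Z_{i+1}^n)$, since $X_i$ is independent of $(Y^{i-1},Z_{i+1}^n)$ given $Y_i$; (ii) the data-processing inequality $I(F;X_i|Y^i,Z_{i+1}^n)\le I(F;X_i|Y^i,X_{i+1}^n)$, because $Z_{i+1}^n$ is a memoryless, hence noise-degraded, image of $X_{i+1}^n$; and (iii) the identity $\sum_i I(F;X_i|Y^i,X_{i+1}^n)=I(F;X^n|Y^n)$, valid for any $F$ that is a function of $X^n$ and of local randomness independent of the sources. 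Together these give $\sum_i I(V_i;X_i|Y_i)\le nR_1$ and $\sum_i I(U_i;X_i|Y_i)\le n(R_1+R_2)+n\epsilon_n$. Introducing a time-sharing variable $Q$ uniform on $[1:n]$, setting $U=(U_Q,Q)$, $V=(V_Q,Q)$, $X=X_Q$, $Y=Y_Q$, $Z=Z_Q$, applying the support lemma to bound $|\mathcal{U}|,|\mathcal{V}|$, and letting $n\to\infty$ then completes the converse.

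The step I expect to be the main obstacle is arranging that one and the same pair $(V,U)$ satisfies all three inequalities at once. The Csisz\'ar--K\"orner manipulation behind \eqref{TH1} forces the auxiliaries to carry the ``genie'' $(Y^{i-1},Z_{i+1}^n)$, and, because the sources are not assumed degraded, the communication terms cannot be controlled by comparing $Z_{i+1}^n$ with $Y_{i+1}^n$; the remedy — the technical crux of the discrete converse — is to compare instead with $X_{i+1}^n$, which dominates both $Y$ and $Z$, and the identity $\sum_i I(F;X_i|Y^i,X_{i+1}^n)=I(F;X^n|Y^n)$ is exactly what makes that comparison lossless. A secondary subtlety is keeping the coefficient of $R_2$ equal to one in \eqref{TH1}: the private message $M_2$ must enter $U$ only through $K_1$ and never be adjoined to the auxiliaries directly, since $M_2$ is hidden from Eve and charging it on the eavesdropper side would cost a spurious extra $R_2$.
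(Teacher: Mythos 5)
Your proposal is correct and follows essentially the same route as the paper: the achievability is the same two-layer superposition Wyner--Ziv scheme (inner $V^n$ bin on the public channel, $U^n$ refinement split across the leftover public and the private rate, with the key assembled from the secret bin index of rate $I(U;Y|V)-I(U;Z|V)$, the privately sent refinement bits, and a one-time pad), and the converse uses the same auxiliaries $V_i=(M_1,Y^{i-1},Z_{i+1}^n)$, Fano plus the leakage constraint, and the Csisz\'ar sum identity. The only deviations are bookkeeping: you keep $M_2$ out of $U_i$ and extract the $+R_2$ via $H(K_1|M_1,Y^n)\le H(M_2)+n\epsilon_n$, whereas the paper sets $U=(K,M_2,V)$ and absorbs $H(M_2)\le nR_2$ inside its chain (so your closing ``subtlety'' that $M_2$ must never be adjoined to the auxiliaries is a misconception, though harmless to your argument); likewise your conditioning-swap step (ii) is valid here, but not by bare data processing --- it needs both the i.i.d.\ structure (so that $X_i$ is independent of the genie given $Y_i$, as in your step (i)) and the Markov chain $Z_{i+1}^n\to X_{i+1}^n\to(F,X_i,Y^i)$, which together reproduce the paper's add-$Z_{i+1}^n$/drop-$X_{i+1}^n$ manipulation.
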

\begin{proof}
See Section \ref{prf1}.
\end{proof}

\subsection{Vector Gaussian Sources}

\begin{figure}
	\centering
	\setlength{\unitlength}{0.65cm}
	\begin{picture}(21,8)
	\linethickness{1pt}
	\put(2,5){\framebox(3.5,2.5)}
	\put(15,5){\framebox(3.5,2.5)}
	\put(8.5,1){\framebox(4.5,2)}
	\put(2.8,6.1){\makebox{Encoder}}
	\put(15.9,6.1){\makebox{Decoder}}
	\put(9.3,1.9){\makebox{Eavesdropper}}
	\put(0,6.25){\vector(1,0){2}}
	\put(20.5,6.25){\vector(-1,0){2}}
	\put(5.5,7){\vector(1,0){9.5}}
	\put(5.5,5.5){\vector(1,0){9.5}}
	\put(6.5,2){\vector(1,0){2}}
    \put(3.5,5){\vector(0,-1){1.5}}
    \put(16.7,5){\vector(0,-1){1.5}}
	\put(10.25,5.5){\vector(0,-1){2.5}}
	\put(1,6.75){\makebox(0,0){\large $\mathbf{X}^n$}}
	\put(20,6.75){\makebox(0,0){\large $\mathbf{Y}^n$}}
	\put(7.5,2.5){\makebox(0,0){\large $\mathbf{Z}^n$}}
	\put(10,7.6){\makebox(0,0){\large $M_2$}}
	\put(10,6.1){\makebox(0,0){\large $M_1$}}
     \put(16.8,2.8){\makebox(0,0){\large $K_2$}}
     \put(3.6,2.8){\makebox(0,0){\large $K_1$}}
	\end{picture}
	\caption{Secret key generation with rate constrained public and private communications}\label{model}
\end{figure}
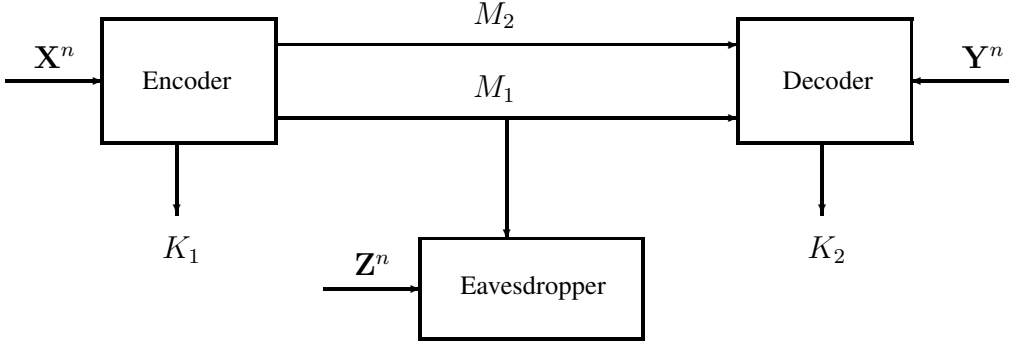

Now we study the same communication constrained secret key generation problem, for the vector Gaussian sources setting (see Fig.\ref{model}). Let $\{\mathbf{X}(t), \mathbf{Y}(t), \mathbf{Z}(t)\}_{t=1}^{n}$ be i.i.d. vector-valued discrete time Gaussian sources, where across the time index $t$, each tuple is drawn from the same jointly vector Gaussian distribution. The encoder, the legitimate decoder and the eavesdropper decoder observe $\mathbf{X}(t)$, $\mathbf{Y}(t)$ and $\mathbf{Z}(t)$, respectively. The vector Gaussian source $\{\mathbf{X}(t), \mathbf{Y}(t), \mathbf{Z}(t)\}_{t=1}^{n }$ can be written as
\begin{align}
\mathbf{Y}(t) =  \mathbf{X} (t) + \mathbf{N}_{Y}(t), \label{eq:source1}\\
\mathbf{Z}(t) =  \mathbf{X} (t) + \mathbf{N}_{Z}(t), \label{eq:source2}
\end{align}
where each $\mathbf{X}(t)$ is a $p\times1$-dimensional Gaussian random vector with mean zero and covariance $\mathbf{K} \succ0$, each $\mathbf{N}_{Y}(t)$ is a $p\times1$-dimensional Gaussian random vector with mean zero and covariance $\mathbf{K}_{Y} \succ 0 $, and $\mathbf{N}_{Z}(t)$ is a $p\times1$-dimensional Gaussian random vector with mean zero and covariance $\mathbf{K}_{Z} \succ 0 $, respectively. We shall point out that $(\mathbf{N}_{Y}(t), \mathbf{N}_{Z}(t))$ and $\mathbf{X}(t)$ are independent from expressions \eqref{eq:source1} and \eqref{eq:source2}. However, no additional independence relationship is imposed between $\mathbf{N}_{Y}(t)$ and $\mathbf{N}_{Z}(t)$.

In \cite{WO11}, the authors showed that a single layer code suffices, and characterize the optimal trade-off on $R_{K}$ and $R_{1}$ for the vector Gaussian sources setting. Their converse method is motivated by the \emph{enhancement} argument \cite{MIMO}, \cite{exinq} for vector Gaussian wiretap channel in \cite{Liu09}. In this paper, we show a similar enhancement on two-layer superposition codes can be applied to establish the converse proof on the optimal $(R_{K}, R_{1}, R_{2})$ trade-off problem.

According to Theorem \ref{mainth} for discrete memoryless sources , a single-letter description of the optimal $(R_{K}, R_{1}, R_{2})$ trade-off for the vector Gaussian secret key generation problem can be given as follows.

\begin{theorem}\label{mainth2}
For the vector Gaussian secret key generation problem with public and private communication constraints, the rate tuple $(R_K,R_1,R_2)$ is achievable if and only if
\begin{align}
   R_K-R_{2} & \leq \frac{1}{2} \log \frac{|\mathbf{K}+\mathbf{K}_{Y}-\mathbf{B}_{1}|}{|\mathbf{K}+\mathbf{K}_{Y}-\mathbf{B}_{1}-\mathbf{B}_{2}|} - \frac{1}{2} \log \frac{|\mathbf{K}+\mathbf{K}_{Z}-\mathbf{B}_{1}|}{|\mathbf{K}+\mathbf{K}_{Z}-\mathbf{B}_{1}-\mathbf{B}_{2}|},    \label{eq:alin1}   \\
  R_1+R_2 &\ge \frac{1}{2} \log \frac{|\mathbf{K}|}{|\mathbf{K}-\mathbf{B}_{1}-\mathbf{B}_{2}|}-\frac{1}{2}\log \frac{|\mathbf{K}+\mathbf{K}_{Y}|}{|\mathbf{K}+\mathbf{K}_{Y}-\mathbf{B}_{1}-\mathbf{B}_{2}|}, \label{eq:align2}\\
  R_1&\ge \frac{1}{2} \log \frac{|\mathbf{K}|}{|\mathbf{K}-\mathbf{B}_{1}|}-\frac{1}{2}\log \frac{|\mathbf{K}+\mathbf{K}_{Y}|}{|\mathbf{K}+\mathbf{K}_{Y}-\mathbf{B}_{1}|}, \label{eq:align3}
\end{align}
for some positive semi-definite matrices $\mathbf{B}_{1}$, $\mathbf{B}_{2} \succeq \mathbf{0}$.
\end{theorem}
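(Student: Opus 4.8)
The plan is to obtain both directions from Theorem~\ref{mainth}: achievability by evaluating the single-letter region at jointly Gaussian auxiliaries, and the converse by combining the enhancement argument with the new extremal inequality proved in Section~IV. For achievability, fix positive semi-definite $\mathbf{B}_1,\mathbf{B}_2$; it is enough to treat the regime $\mathbf{0}\preceq\mathbf{K}-\mathbf{B}_1-\mathbf{B}_2\preceq\mathbf{K}-\mathbf{B}_1\preceq\mathbf{K}$, since outside it the determinants in \eqref{eq:alin1}--\eqref{eq:align3} are no longer all positive definite and the constraints carry no information. Take $(V,U)$ jointly Gaussian with $\mathbf{X}$, satisfying $V\to U\to\mathbf{X}\to(\mathbf{Y},\mathbf{Z})$ and $\cov(\mathbf{X}\mid V)=\mathbf{K}-\mathbf{B}_1$, $\cov(\mathbf{X}\mid U)=\mathbf{K}-\mathbf{B}_1-\mathbf{B}_2$; the displayed ordering is exactly what guarantees that such a Markov triple exists. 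Since $(\mathbf{N}_Y,\mathbf{N}_Z)$ is independent of $(\mathbf{X},U,V)$, one has $\cov(\mathbf{Y}\mid\,\cdot\,)=\cov(\mathbf{X}\mid\,\cdot\,)+\mathbf{K}_Y$ and $\cov(\mathbf{Z}\mid\,\cdot\,)=\cov(\mathbf{X}\mid\,\cdot\,)+\mathbf{K}_Z$. Writing $I(A;\mathbf{X}\mid\mathbf{Y})=I(A;\mathbf{X})-I(A;\mathbf{Y})$ and $I(U;\mathbf{Y}\mid V)=h(\mathbf{Y}\mid V)-h(\mathbf{Y}\mid U)$, and evaluating the Gaussian differential entropies, turns \eqref{TH1}, \eqref{TH2}, \eqref{TH11} into precisely \eqref{eq:alin1}, \eqref{eq:align2}, \eqref{eq:align3}, which establishes achievability.

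For the converse, extending Theorem~\ref{mainth} to continuous alphabets by the usual quantization argument, it remains to prove that Gaussian auxiliaries are optimal for the Gaussian source; equivalently, for every $(V,U)$ with $V\to U\to\mathbf{X}\to(\mathbf{Y},\mathbf{Z})$ there exist positive semi-definite $\mathbf{B}_1,\mathbf{B}_2$ such that $I(V;\mathbf{X}\mid\mathbf{Y})$ and $I(U;\mathbf{X}\mid\mathbf{Y})$ are at least the right-hand sides of \eqref{eq:align3} and \eqref{eq:align2}, while $I(U;\mathbf{Y}\mid V)-I(U;\mathbf{Z}\mid V)$ is at most the right-hand side of \eqref{eq:alin1}. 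I would fix a boundary point of the Gaussian region, described by an optimal pair $(\mathbf{B}_1^\star,\mathbf{B}_2^\star)$ and the KKT multipliers of the associated matrix optimization, and show that the corresponding supporting linear functional of $(V,U)$ is maximized by Gaussian auxiliaries. Because, unlike in \cite{WO11}, a single auxiliary does not suffice, this is done layer by layer: the outer (``public'') layer carried by $V$ against \eqref{eq:align3}, and the inner layer carried by $U$ given $V$ against \eqref{eq:align2}--\eqref{eq:alin1}.

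The step that forces Gaussianity is the enhancement argument combined with the new extremal inequality, and this is the heart of the matter. Since the source is not degraded, I would, starting from $(\mathbf{B}_1^\star,\mathbf{B}_2^\star)$ and the KKT conditions, construct enhanced noise covariances $\widetilde{\mathbf{K}}_Y\preceq\mathbf{K}_Y$ (and, for the inner layer, a second enhanced covariance) so that (i) in the enhanced model the relevant channels become degraded---an $\mathbf{X}\to\widetilde{\mathbf{Y}}\to\mathbf{Z}$ ordering and a compound-degraded ordering relating the two layers---and (ii) the enhancement is \emph{tight} at $(\mathbf{B}_1^\star,\mathbf{B}_2^\star)$, leaving the Gaussian values of \eqref{eq:alin1}--\eqref{eq:align3} unchanged. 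On the enhanced model the extremal inequality of Section~IV then splits into two pieces: one of the degraded compound MIMO Gaussian broadcast channel type \cite{WLSSV09,CL14}, which bounds the contribution of the outer layer $V$, and one of the vector-Costa entropy-power-inequality type \cite{Liu10,WO11}, which bounds the contribution of the inner layer $U$ given $V$. Chaining these bounds and undoing the (tight) enhancement recovers the Gaussian value, completing the converse.

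The main obstacle is precisely this coupled enhancement/extremal-inequality step. With two superposed auxiliaries the enhancement has to be carried out for both layers at once, and one must check that a single choice of enhanced covariances can be made simultaneously consistent with the KKT conditions of the inner and the outer problem---so that both become degraded and both stay tight at the candidate point. The enhanced covariances are not unique (the entropy-power-inequality piece in particular demands that $\mathbf{Z}$ be a degraded version of $\widetilde{\mathbf{Y}}$, which selects the correct one), and the two extremal inequalities then have to be glued along the common auxiliary $V$; carrying this out rigorously is what occupies Section~IV.
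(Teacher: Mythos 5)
Your proposal follows essentially the same route as the paper: achievability by evaluating Theorem~\ref{mainth} with jointly Gaussian auxiliaries parameterized via $\cov(\mathbf{X}\mid V)=\mathbf{K}-\mathbf{B}_1$ and $\cov(\mathbf{X}\mid U)=\mathbf{K}-\mathbf{B}_1-\mathbf{B}_2$, and a converse via the weighted-sum (supporting hyperplane) formulation, KKT conditions for the Gaussian matrix optimization, a single enhanced covariance $\tilde{\mathbf{K}}_Y$ that is degraded with respect to $\mathbf{K}_Z$ and tight at $(\mathbf{B}_1^{*},\mathbf{B}_2^{*})$, and a split of the extremal inequality into a degraded compound MIMO BC piece for the $V$-terms and a vector Costa EPI piece for the $U$-terms. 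This matches the paper's proof (Appendices A--D and Section IV), with your ``gluing along $V$'' corresponding to the paper's residual cross term \eqref{eq:sub3} being bounded by zero via the Markov chain and conditioning.
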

\begin{proof}
The achievable part of Theorem \ref{mainth2} is based on constructing Gaussian test channels to maintain the Markov chain $ V \rightarrow U \rightarrow \mathbf{X} \rightarrow (\mathbf{Y},\mathbf{Z})$, and the details can be found in Appendix \ref{app:0}. The converse part of Theorem \ref{mainth2} relies on a careful combination of the usage of channel enhancement argument and extremal inequalities, and the details can be found in Section \ref{prf2}.
\end{proof}
\begin{remark}
It should be pointed out that the jointly vector Gaussian sources given by \eqref{eq:source1} and \eqref{eq:source2} are not in the most general form. In the general case, we have
\begin{align}
\mathbf{Y}(t) =  \mathbf{H}_{Y}\mathbf{X} (t) + \mathbf{N}_{Y}(t), \\
\mathbf{Z}(t) =  \mathbf{H}_{Z}\mathbf{X} (t) + \mathbf{N}_{Z}(t),
\end{align}
for some matrices $\mathbf{H}_{Y} \in \mathbb{R}^{m_{Y} \times p}$, $\mathbf{H}_{Z} \in \mathbb{R}^{m_{Z} \times p}$. The extension of Theorem \ref{mainth2} can be obtained by following the lines of \cite[Sec. V]{WO11}, in a similar manner.
\end{remark}

\section{Proof of Theorem \ref{mainth}}\label{prf1}

\subsection{Principles of the Achievability}
The achievability scheme that we propose in this paper can be viewed as a combination of the scheme in secret key generation from correlated sources \cite{csiszar2000common}, consisting of a codebook of the superposition structure, and secret key distribution through secret channel. Moreover, this scheme can be viewed as a special case of the separated achievable scheme in \cite[Th.2]{bassi2016secretArXiv}.\par

The rate of the public channel and the secure channel are used for the transmission of the following:
\begin{enumerate}
\item the inner code $V^n$.
\item the outer code $U^n$.
\item key distribution.
\end{enumerate}
Our proposed scheme follows the principles below:
\begin{itemize}
\item The public channel is used to transmit the inner code $V^n$. Since the rate of the inner code is less than the rate of the public channel, then the leftover rate of the public channel will be used to transmit the outer code $U^n$.
\item  The secure channel is used to transmit the outer code $U^n$. Since there is still extra rate leftover in the secure channel, then the leftover rate of the secure channel can be used for key distribution.
\item The public channel can not be used for key distribution and the secure channel can not be used to transmit the inner code $V^n$.
\end{itemize}

We show that the proposed scheme achieves the rate in Theorem \ref{mainth} and is thus, optimal. For the completeness, the details of the proof can be found in Appendix \ref{app:1}.

\subsection{The Converse}

We begin the proof of the converse with
\begin{align}
&nR_K\leq  H(K) \\
  &\overset{(a)}\leq H(K)-H(K|Y^n,M_1,M_2)+n\epsilon \label{ConK1}\\
   &=I(K;Y^n,M_1,M_2)+n\epsilon \\
   &\overset{(b)}\leq  I(K;Y^n,M_1,M_2)-I(K;Z^n,M_1)+2n\epsilon \label{ConK11}\\
   &=I(K;Y^n,M_1)+I(K;M_2|Y^n,M_1)-I(K,M_2;Z^n,M_1)-I(M_2;Z^n,M_1|K)+2n\epsilon \\
   &=I(K,M_2;Y^n,M_1)+I(K;M_2|Y^n,M_1)-I(M_2;Y^n,M_1|K)-I(K,M_2;Z^n,M_1)\nonumber\\
   &\quad-I(M_2;Z^n,M_1|K)+2n\epsilon \\
   &=I(K,M_2;Y^n,M_1)+I(K;M_2|Y^n,M_1)-I(M_2;Y^n,M_1,K)-I(K,M_2;Z^n,M_1)\nonumber\\
   &\quad-I(M_2;Z^n,M_1,K)+2n\epsilon \\
   &=I(K,M_2;Y^n,M_1)-I(M_2;Y^n,M_1)-I(K,M_2;Z^n,M_1)-I(M_2;Z^n,M_1,K)+2n\epsilon \\
   &=I(K,M_2;Y^n|M_1)-I(K,M_2;Z^n|M_1)-I(M_2;Y^n,M_1)-I(M_2;Z^n,M_1,K)+2n\epsilon \\
   &= I(K,M_2;Y^n|M_1)-I(K,M_2;Z^n|M_1)+H(M_2)-I(M_2;Y^n,M_1)-H(M_2|Z^n,M_1,K)+2n\epsilon \\
   &\overset{(c)}\leq I(K,M_2;Y^n|M_1)-I(K,M_2;Z^n|M_1)+nR_2+2n\epsilon.
\end{align}
where
\begin{enumerate}
\item[(a)] follows from Fano's inequality;
\item[(b)] follows from the secrecy constraint in (\ref{SMSC1});
\item[(c)]follows by the non-negativity of discrete entropy and mutual information.
\end{enumerate}

By applying the key identity \cite[Lemma 17.12]{csiszar2011information}, it follows that
\begin{align}
  I(K,M_2;Y^n|M_1)&-I(K,M_2;Z^n|M_1)= n[I(U;Y_J|V)-I(U;Z_J|V)],
\end{align}
where
\begin{align}
V&\triangleq (M_1,Y^{i-1},Z_{i+1}^n,J),\label{Tdef}\\
U&\triangleq (K,M_2,T).\label{Udef}
\end{align}
Here, $J$ is uniformly distributed on $[1:n]$ and independent of $(X^n,Y^n,Z^n)$. Since $K$, $M_1$ and $M_2$ is a function of $X^n$, the Markov Chain $V \rightarrow U \rightarrow X \rightarrow (Y,Z) $ is satisfied. Because of the fact that $(X^n,Y^n,Z^n)$ are i.i.d., we can replace $Y_J$ and $Z_J$ by $Y$ and $Z$. Then (\ref{TH1}) is proved.

Next, we consider the sum rate $(R_1+R_2)$
\begin{align}
 &n(R_1+R_2)\geq H(M_1,M_2)\\
   &\geq H(M_1,M_2|Y^n) \\
   &\overset{(a)}\geq H(K,M_1,M_2|Y^n)-n\epsilon \label{ConSR1}\\
   &\geq H(K,M_1,M_2|Y^n)-H(K,M_1,M_2|X^n) -n\epsilon\\
   &= I(K,M_1,M_2;X^n)-I(K,M_1,M_2;Y^n)-n\epsilon\nonumber\\
   &\overset{(b)}= \sum_{i=1}^n[I(K,M_1,M_2;X_i|X_{i+1}^n,Y^{i-1})-I(K,M_1,M_2;Y_i|X_{i+1}^n,Y^{i-1})]-n\epsilon \label{ConSR2}\\
   &\overset{(c)}= \sum_{i=1}^n[I(K,M_1,M_2,X_{i+1}^n,Y^{i-1};X_i)-I(K,M_1,M_2,X_{i+1}^n,Y^{i-1};Y_i)] -n\epsilon\label{ConSR3}\\
   &\overset{(d)}= \sum_{i=1}^n[I(K,M_1,M_2,X_{i+1}^n,Y^{i-1},Z_{i+1}^n;X_i)-I(K,M_1,M_2,X_{i+1}^n,Y^{i-1},Z_{i+1}^n;Y_i)] -n\epsilon\label{ConSR4}\\
   &\overset{(e)}= \sum_{i=1}^n[I(K,M_1,M_2,X_{i+1}^n,Y^{i-1},Z_{i+1}^n;X_i|Y_i)-n\epsilon\label{ConSR41}\\
   &\overset{(f)}\geq \sum_{i=1}^n[I(K,M_1,M_2,Y^{i-1},Z_{i+1}^n;X_i)-I(K,M_1,M_2,Y^{i-1},Z_{i+1}^n;Y_i)]-n\epsilon,\label{ConSR5}
\end{align}
where
\begin{enumerate}
\item[(a)] follows from Fano's inequality;
\item[(b)] follows from the key identity \cite[Lemma 17.12]{csiszar2011information};
\item[(c)] follows because $(X^n,Y^n,Z^n)$ are i.i.d.;
\item[(d)] follows from the Markov Chain $Z_{i+1}^n \rightarrow (K,M_1,M_2,X_{i+1}^n,Y^{i-1}) \rightarrow (X_i,Y_i) $;
\item[(e)] follows from the Markov Chain $Y_i\rightarrow X_i\rightarrow (K,M_1,M_2,X_{i+1}^n,Y^{i-1},Z_{i+1}^n)$;
\item[(f)] follows from the Markov Chain $X_{i+1}^n \rightarrow (K,M_1,M_2,Y^{i-1},Z_{i+1}^n,X_i) \rightarrow Y_i $.
\end{enumerate}
We thereby have
\begin{align}
  & n(R_1+R_2) \geq  n[I(U;X_J)-I(U;Y_J)] \\
   &= nI(U;X_J|Y_J)\\
   &= nI(U;X|Y).
\end{align}
with $U$ as defined in (\ref{Udef}) and $J$ is uniform on $[1:n]$.

Finally, for the public rate $R_1$, we have
\begin{align}
  & nR_1 \geq H(M_1)\\
  &\geq H(M_1|Y^n) \\
   &\geq H(M_1|Y^n)-H(M_1|X^n) \\
   &= I(M_1;X^n)-I(M_1;Y^n).
\end{align}
By the similar argument as in the sum rate derivation, we can get 
\begin{align}
  & nR_1 \geq \sum_{i=1}^n[I(M_1,Y^{i-1},Z_{i+1}^n;X_i)-I(M_1,Y^{i-1},Z_{i+1}^n;Y_i)]  \\
   &\geq  n[I(V;X_J)-I(V;Y_J)] \\
   &= nI(V;X_J|Y_J)\\
   &= nI(V;X|Y).
\end{align}
where $V$ as defined in (\ref{Tdef}) and $J$ is uniform on $[1:n]$,  which concludes the converse proof of Theorem \ref{mainth}.

\section{The Converse of Theorem \ref{mainth2}}\label{prf2}

\subsection{The Extremal Inequality}

As in \cite{WO11}, the achievable rate region of $(R_{K}, R_{1}, R_{2})$ for vector Gaussian sources is defined as
\begin{equation}
\mathcal{R}(\mathbf{X}, \mathbf{Y}, \mathbf{Z}) \triangleq \left\{   (R_{K}, R_{1}, R_{2}): (R_{K}, R_{1}, R_{2}) \text{ is achievable}       \right\}.
\end{equation}
Due to the convexity of $\mathcal{R}(\mathbf{X}, \mathbf{Y}, \mathbf{Z})$, to characterize the optimal trade-off of $(R_{K}, R_{1}, R_{2})$  for the vector Gaussian model, we can write the following $\mu$-sum problem, alternatively,
\begin{align}
&\inf_{(R_{K}, R_{1}, R_{2}) \in \mathcal{R}(\mathbf{X}, \mathbf{Y}, \mathbf{Z}) } \mu_{1} (R_{2} - R_{K}) + \mu_{2} (R_{1}+R_{2}) + \mu_{3} R_{1} \\
&=\inf_{(R_{K}, R_{1}, R_{2}) \in \mathcal{R}(\mathbf{X}, \mathbf{Y}, \mathbf{Z}) } (\mu_{2}+ \mu_{3})R_{1} + (\mu_{1}+\mu_{2})R_{2} - \mu_{1} R_{K},
\end{align}
for any $\mu_{1}, \mu_{2}, \mu_{3} \geq 0$. To prove the converse part of Theorem \ref{mainth2}, it is equivalent to show the following inequality folds for any $(R_{K}, R_{1}, R_{2}) \in \mathcal{R}(\mathbf{X}, \mathbf{Y}, \mathbf{Z})$,
\begin{equation}
(\mu_{2}+ \mu_{3})R_{1} + (\mu_{1}+\mu_{2})R_{2} - \mu_{1} R_{K} \geq \mathfrak{R}^{*}(\mu_{1}, \mu_{2}, \mu_{3}),
\end{equation}
where $\mathfrak{R}^{*}(\mu_{1}, \mu_{2}, \mu_{3})$ is a Gaussian optimization problem shown as follows
\begin{align} \label{eqn:opt}
& \mathfrak{R}^{*}(\mu_{1}, \mu_{2}, \mu_{3}) \nonumber \\
&\triangleq \min_{\mathbf{B}_{1},\mathbf{B}_{2}}\frac{\mu_1+\mu_2}{2} \log \left|  \mathbf{K} + \mathbf{K}_{Y} - \mathbf{B}_{1}-\mathbf{B}_{2} \right|- \frac{\mu_1}{2} \log \left|\mathbf{K} + \mathbf{K}_{Z}- \mathbf{B}_{1}-\mathbf{B}_{2} \right| \nonumber \\
&\qquad\quad- \frac{\mu_{2}}{2} \log \left|  \mathbf{K}  - \mathbf{B}_{1}- \mathbf{B}_{2} \right| + \frac{\mu_1}{2} \log \left|  \mathbf{K} + \mathbf{K}_{Z} - \mathbf{B}_{1} \right| \nonumber \\
&\qquad\quad+ \frac{\mu_3-\mu_1}{2} \log \left|  \mathbf{K} + \mathbf{K}_{Y} - \mathbf{B}_{1} \right| - \frac{\mu_3}{2} \log \left|  \mathbf{K}  - \mathbf{B}_{1} \right| \nonumber \\
&\qquad\quad+ \frac{\mu_2+\mu_3}{2} \log |\mathbf{K}| -\frac{\mu_2+\mu_3}{2} \log |\mathbf{K}+\mathbf{K}_{Y}|  \nonumber \\
&\qquad\text{subject to}\quad \mathbf{B}_{1} \succeq \mathbf{0}, \; \mathbf{B}_{2} \succeq \mathbf{0}.
\end{align}

Let $(\mathbf{B}_{1}^{*},\mathbf{B}_{2}^{*})$ be one (non-unique) minimizer of the optimization problem $\mathfrak{R}^{*}(\mu_{1}, \mu_{2}, \mu_{3})$. The necessary Karush-Kuhn-Tucker (KKT) conditions are given in the following lemma.
\begin{lemma} \label{lemma_KKT} The minimizer $(\mathbf{B}_{1}^{*},\mathbf{B}_{2}^{*})$ of $\mathfrak{R}^{*}(\mu_{1}, \mu_{2}, \mu_{3})$ need to satisfy
\begin{align}
 \frac{\mu_1}{2} \left(\mathbf{K} + \mathbf{K}_{Z}- \mathbf{B}_{1}^{*}-\mathbf{B}_{2}^{*} \right)^{-1}+ \frac{\mu_{2}}{2}  \left(  \mathbf{K}  - \mathbf{B}_{1}^{*}- \mathbf{B}_{2}^{*} \right)^{-1}=\frac{\mu_1+\mu_2}{2} \left(  \mathbf{K} + \mathbf{K}_{Y} - \mathbf{B}_{1}^{*}-\mathbf{B}_{2}^{*} \right)^{-1} + \mathbf{M}_{2}, \label{eq:KKT1}\\
   \frac{\mu_3}{2} \left( \mathbf{K}  - \mathbf{B}_{1}^{*} \right)^{-1} + \frac{\mu_1-\mu_3}{2}  \left(  \mathbf{K} + \mathbf{K}_{Y} - \mathbf{B}_{1}^{*} \right)^{-1} +\mathbf{M}_{2} =  \frac{\mu_1}{2}  \left(  \mathbf{K} + \mathbf{K}_{Z} - \mathbf{B}_{1}^{*} \right)^{-1} + \mathbf{M}_{1}, \label{eq:KKT2}
\end{align}
for some positive semi-definite matrices $\mathbf{B}_{1}^{*}, \mathbf{B}_{2}^{*}, \mathbf{M}_{1}, \mathbf{M}_{2} \succeq0$ such that
\begin{align}
\mathbf{B}_{1}^{*} \mathbf{M}_{1} = \mathbf{M}_{1}\mathbf{B}_{1}^{*} =\mathbf{0}, \label{eq:KKT4}\\
\mathbf{B}_{2}^{*} \mathbf{M}_{2} = \mathbf{M}_{2}\mathbf{B}_{2}^{*} =\mathbf{0}. \label{eq:KKT5}
\end{align}
\end{lemma}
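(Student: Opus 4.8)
The plan is to read Lemma~\ref{lemma_KKT} off as the first-order (KKT) stationarity conditions of the convex-domain minimization \eqref{eqn:opt}. First I would attach positive semi-definite Lagrange multipliers $\mathbf{M}_{1},\mathbf{M}_{2}\succeq\mathbf{0}$ to the two conic constraints $\mathbf{B}_{1}\succeq\mathbf{0}$, $\mathbf{B}_{2}\succeq\mathbf{0}$ and form the Lagrangian, i.e.\ the objective of \eqref{eqn:opt} minus $\tr(\mathbf{M}_{1}\mathbf{B}_{1})+\tr(\mathbf{M}_{2}\mathbf{B}_{2})$. The only matrix-calculus facts needed are $\partial_{\mathbf{B}}\log|\mathbf{A}-\mathbf{B}|=-(\mathbf{A}-\mathbf{B})^{-1}$ for symmetric arguments with $\mathbf{A}-\mathbf{B}\succ\mathbf{0}$, and $\partial_{\mathbf{B}}\tr(\mathbf{M}\mathbf{B})=\mathbf{M}$; the additive constant $\tfrac{\mu_{2}+\mu_{3}}{2}\log|\mathbf{K}|-\tfrac{\mu_{2}+\mu_{3}}{2}\log|\mathbf{K}+\mathbf{K}_{Y}|$ has zero gradient and drops out. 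Since the constraints are conic/affine in the generalized-inequality sense, the Abadie/Slater constraint qualification holds automatically, so the KKT conditions are necessary at any minimizer $(\mathbf{B}_{1}^{*},\mathbf{B}_{2}^{*})$; convexity of the objective is not needed for this necessity.

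Next I would carry out the two stationarity computations. Only the first three log-determinant terms of \eqref{eqn:opt} involve $\mathbf{B}_{2}$, each through the combination $\mathbf{B}_{1}+\mathbf{B}_{2}$; setting the $\mathbf{B}_{2}$-gradient to zero gives \eqref{eq:KKT1} directly. Differentiating in $\mathbf{B}_{1}$ brings in all six log-determinant terms, and the three of them that depend on $\mathbf{B}_{1}$ only through $\mathbf{B}_{1}+\mathbf{B}_{2}$ assemble into exactly $\mathbf{M}_{2}$ by \eqref{eq:KKT1}; substituting this cancels those terms and leaves
\begin{equation}
\mathbf{M}_{2}-\frac{\mu_{1}}{2}\left(\mathbf{K}+\mathbf{K}_{Z}-\mathbf{B}_{1}^{*}\right)^{-1}-\frac{\mu_{3}-\mu_{1}}{2}\left(\mathbf{K}+\mathbf{K}_{Y}-\mathbf{B}_{1}^{*}\right)^{-1}+\frac{\mu_{3}}{2}\left(\mathbf{K}-\mathbf{B}_{1}^{*}\right)^{-1}-\mathbf{M}_{1}=\mathbf{0},
\end{equation}
which, after moving terms and using $-(\mu_{3}-\mu_{1})=\mu_{1}-\mu_{3}$, is precisely \eqref{eq:KKT2}. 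Finally, complementary slackness for the conic constraints is $\tr(\mathbf{M}_{i}\mathbf{B}_{i}^{*})=0$; writing $\tr(\mathbf{M}_{i}^{1/2}\mathbf{B}_{i}^{*}\mathbf{M}_{i}^{1/2})=0$ with a positive semi-definite integrand forces $\mathbf{M}_{i}^{1/2}\mathbf{B}_{i}^{*1/2}=\mathbf{0}$, hence $\mathbf{M}_{i}\mathbf{B}_{i}^{*}=\mathbf{B}_{i}^{*}\mathbf{M}_{i}=\mathbf{0}$, giving \eqref{eq:KKT4}--\eqref{eq:KKT5}.

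The step that needs care — and the only real obstacle — is ensuring that the matrices being inverted are nonsingular at $(\mathbf{B}_{1}^{*},\mathbf{B}_{2}^{*})$, so that the stationarity equations are the right first-order conditions. On the effective domain one has $\mathbf{B}_{1}+\mathbf{B}_{2}\preceq\mathbf{K}$ (otherwise $\log|\mathbf{K}-\mathbf{B}_{1}-\mathbf{B}_{2}|$ and $\log|\mathbf{K}-\mathbf{B}_{1}|$ are not real), whence $\mathbf{K}+\mathbf{K}_{Y}-\mathbf{B}_{1}-\mathbf{B}_{2}\succeq\mathbf{K}_{Y}\succ\mathbf{0}$ and likewise for the $\mathbf{K}_{Z}$-arguments and the three $\mathbf{B}_{1}$-only arguments, since $\mathbf{K}_{Y},\mathbf{K}_{Z},\mathbf{K}\succ\mathbf{0}$; so those five inverses are automatically well defined. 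For the remaining arguments $\mathbf{K}-\mathbf{B}_{1}-\mathbf{B}_{2}$ and $\mathbf{K}-\mathbf{B}_{1}$, the negatively-weighted terms $-\tfrac{\mu_{2}}{2}\log|\mathbf{K}-\mathbf{B}_{1}-\mathbf{B}_{2}|$ and $-\tfrac{\mu_{3}}{2}\log|\mathbf{K}-\mathbf{B}_{1}|$ drive the objective to $+\infty$ as those determinants approach $0$, so a minimizer stays in the interior; the borderline cases in which some $\mu_{i}$ vanishes (those terms then simply disappearing) are handled by a standard continuity/perturbation argument in the $\mu_{i}$. With invertibility in hand, the computed gradients are exactly \eqref{eq:KKT1}--\eqref{eq:KKT2} and the lemma follows.
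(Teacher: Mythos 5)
Your proposal is correct and follows essentially the same route as the paper's Appendix~C proof: form the Lagrangian with multipliers $\mathbf{M}_{1},\mathbf{M}_{2}$ for the conic constraints, set the $\mathbf{B}_{2}$- and $\mathbf{B}_{1}$-gradients to zero (the former giving \eqref{eq:KKT1}, the latter reducing to \eqref{eq:KKT2} after substitution), and convert $\tr\{\mathbf{B}_{i}^{*}\mathbf{M}_{i}\}=0$ into $\mathbf{B}_{i}^{*}\mathbf{M}_{i}=\mathbf{M}_{i}\mathbf{B}_{i}^{*}=\mathbf{0}$. Your added care about constraint qualification and interiority/invertibility of the log-determinant arguments is a sound refinement of details the paper leaves implicit, not a different method.
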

\begin{proof}
See Appendix \ref{KKT}.
\end{proof}

Now Starting from the single-letter expressions in Theorem \ref{mainth}, the $\mu$-sum problem for any rate tuple in $\mathcal{R}(\mathbf{X}, \mathbf{Y}, \mathbf{Z})$ should be lower bounded by
\begin{align}
 &(\mu_{2}+ \mu_{3})R_{1} + (\mu_{1}+\mu_{2})R_{2} - \mu_{1} R_{K}\\
 &=\mu_{1} (R_{2} - R_{K}) + \mu_{2} (R_{1}+R_{2}) + \mu_{3} R_{1} \\
 & \geq \mu_{1} (I(U;\mathbf{Z}|V)-I(U;\mathbf{Y}|V)) + \mu_{2} I(U; \mathbf{X} | \mathbf{Y}) + \mu_3 I(V; \mathbf{X} | \mathbf{Y})\\
 & \overset{(a)} = \mu_{1} (I(U;\mathbf{Z}|V)-I(U;\mathbf{Y}|V)) + \mu_{2} ( I(U; \mathbf{X}) - I (U; \mathbf{Y})          ) + \mu_{3} ( I(V; \mathbf{X}) - I (V; \mathbf{Y})          )\\
 & \overset{(b)} = \mu_{1} h(\mathbf{Z}|V)    +       (\mu_3 - \mu_1) h(\mathbf{Y}|V) - \mu_3 h(\mathbf{X}|V )+(\mu_1+\mu_2)h(\mathbf{Y}|U) - \mu_2 h(\mathbf{X}|U) - \mu_{1} h(\mathbf{Z}|U) \nonumber \\
 & \quad\; + (\mu_2+\mu_3)h(\mathbf{X}) - (\mu_2+\mu_3)h(\mathbf{Y}) \\
 & = (\mu_1+\mu_2)h(\mathbf{Y}|U)- \mu_{1} h(\mathbf{Z}|U) - \mu_2 h(\mathbf{X}|U)+  \mu_{1} h(\mathbf{Z}|V)    +       (\mu_3 - \mu_1) h(\mathbf{Y}|V) - \mu_3 h(\mathbf{X}|V )  \nonumber \\
 & \quad \; + \frac{\mu_2+\mu_3}{2} \log |\mathbf{K}| -\frac{\mu_2+\mu_3}{2} \log |\mathbf{K}+\mathbf{K}_{Y}|. \label{eqn:lb}
\end{align}
where
\begin{enumerate}
\item[(a)] follows from Makov Chain $(U,V) \rightarrow \mathbf{X} \rightarrow \mathbf{Y}$,
\item[(b)]  follows from Makov Chain $V \rightarrow U \rightarrow \mathbf{X} \rightarrow (\mathbf{Y}, \mathbf{Z})$.
\end{enumerate}

By comparing \eqref{eqn:lb} with optimization problem $\mathfrak{R}^{*}(\mu_{1}, \mu_{2}, \mu_{3})$ in \eqref{eqn:opt}, it can be shown that to prove Theorem \ref{mainth2}, it is sufficient to prove the following extremal inequality

\begin{theorem} \label{ext_thm}
There exist two positive semi-definite matrices $\mathbf{B}_{1}^{*}$ and $\mathbf{B}_{2}^{*}$, which satisfy KKT conditions \eqref{eq:KKT1}-\eqref{eq:KKT5} in lemma \ref{lemma_KKT} to minimize optimization problem $\mathfrak{R}^{*}(\mu_{1}, \mu_{2}, \mu_{3})$, then for some real numbers $\mu_1, \mu_2, \mu_3 \geq 0$, we have
\begin{align}
&(\mu_1+\mu_2)h(\mathbf{Y}|U)- \mu_{1} h(\mathbf{Z}|U) - \mu_2 h(\mathbf{X}|U)\nonumber \\
&+  \mu_{1} h(\mathbf{Z}|V)    +       (\mu_3 - \mu_1) h(\mathbf{Y}|V) - \mu_3 h(\mathbf{X}|V ) \nonumber \\
&\geq \; \frac{\mu_1+\mu_2}{2} \log \left|  \mathbf{K} + \mathbf{K}_{Y} - \mathbf{B}_{1}^{*}-\mathbf{B}_{2}^{*} \right|- \frac{\mu_1}{2} \log \left|\mathbf{K} + \mathbf{K}_{Z}- \mathbf{B}_{1}^{*}-\mathbf{B}_{2}^{*} \right| \nonumber \\
&\quad- \frac{\mu_{2}}{2} \log \left|  \mathbf{K}  - \mathbf{B}_{1}^{*}- \mathbf{B}_{2}^{*} \right| + \frac{\mu_1}{2} \log \left|  \mathbf{K} + \mathbf{K}_{Z} - \mathbf{B}_{1}^{*} \right| \nonumber \\
&\quad+ \frac{\mu_3-\mu_1}{2} \log \left|  \mathbf{K} + \mathbf{K}_{Y} - \mathbf{B}_{1}^{*} \right| - \frac{\mu_3}{2} \log \left|  \mathbf{K}  - \mathbf{B}_{1}^{*} \right| , \label{eq:exinq}
\end{align}
for any $(U,V)$ such that $V \rightarrow U \rightarrow \mathbf{X} \rightarrow (\mathbf{Y}, \mathbf{Z})$ forms a Markov chain.
\end{theorem}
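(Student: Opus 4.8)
The plan rests on the observation that the left side of \eqref{eq:exinq} separates into a term $f(U)\triangleq(\mu_1+\mu_2)h(\mathbf{Y}|U)-\mu_1 h(\mathbf{Z}|U)-\mu_2 h(\mathbf{X}|U)$ depending only on $U$ and a term $g(V)\triangleq\mu_1 h(\mathbf{Z}|V)+(\mu_3-\mu_1)h(\mathbf{Y}|V)-\mu_3 h(\mathbf{X}|V)$ depending only on $V$; because the coefficients inside each group sum to zero the $\tfrac12\log(2\pi e)$ constants cancel, and the Gaussian right side separates correspondingly into a piece built from $\mathbf{B}_{1}^{*}+\mathbf{B}_{2}^{*}$ and a piece built from $\mathbf{B}_{1}^{*}$. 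One cannot minimize the two pieces independently, because the KKT conditions \eqref{eq:KKT1}--\eqref{eq:KKT2} tie them together through the common multiplier $\mathbf{M}_{2}$; this tie is exactly what makes \eqref{eq:exinq} a \emph{coupled} extremal inequality rather than two separate ones. Following the enhancement methodology of \cite{MIMO,exinq,Liu09}, as specialized to secret-key problems in \cite{WO11}, the idea is to push the coupling into an enhanced Gaussian source model and then bound $f(U)$ and $g(V)$ by two classical single-auxiliary extremal inequalities.

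\emph{Enhancement.} Put $\mathbf{B}^{*}=\mathbf{B}_{1}^{*}+\mathbf{B}_{2}^{*}$. Using \eqref{eq:KKT1} I would construct an enhanced noise covariance $\widetilde{\mathbf{K}}_{Y}\preceq\mathbf{K}_{Y}$ through $(\mathbf{K}+\widetilde{\mathbf{K}}_{Y}-\mathbf{B}^{*})^{-1}=(\mathbf{K}+\mathbf{K}_{Y}-\mathbf{B}^{*})^{-1}+\tfrac{2}{\mu_1+\mu_2}\mathbf{M}_{2}$, which absorbs $\mathbf{M}_{2}$ and turns \eqref{eq:KKT1} into the stationarity condition, attained at $\mathbf{B}^{*}$, of the \emph{decoupled} Costa-type problem $\min_{\mathbf{B}}\bigl[\tfrac{\mu_1+\mu_2}{2}\log|\mathbf{K}+\widetilde{\mathbf{K}}_{Y}-\mathbf{B}|-\tfrac{\mu_1}{2}\log|\mathbf{K}+\mathbf{K}_{Z}-\mathbf{B}|-\tfrac{\mu_2}{2}\log|\mathbf{K}-\mathbf{B}|\bigr]$; since $\widetilde{\mathbf{K}}_{Y}\preceq\mathbf{K}_{Y}$ and the coefficient $\mu_1+\mu_2$ of $h(\mathbf{Y}|U)$ in $f$ is nonnegative, replacing $\mathbf{Y}$ by the degraded $\widetilde{\mathbf{Y}}=\mathbf{X}+\widetilde{\mathbf{N}}_{Y}$ only lowers $f(U)$. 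Dually, using \eqref{eq:KKT2} I would enhance the $V$-layer outputs — a case split on the sign of $\mu_3-\mu_1$ dictating whether the $\mathbf{Y}$-branch is enhanced or further degraded — so as to absorb both $\mathbf{M}_{1}$ and $\mathbf{M}_{2}$ and turn \eqref{eq:KKT2} into the stationarity condition, attained at $\mathbf{B}_{1}^{*}$, of a decoupled extremal inequality of degraded compound Gaussian broadcast-channel type in the sense of \cite{WLSSV09,CL14}. Verifying that all enhanced covariances remain valid (positive semidefinite) with the orderings required for the entropy-monotonicity steps uses $\mathbf{K}\succ\mathbf{0}$, $\mathbf{K}_{Y},\mathbf{K}_{Z}\succ\mathbf{0}$ and the structure of the KKT conditions; when $\mathbf{B}_{1}^{*}$ or $\mathbf{B}_{2}^{*}$ is rank-deficient this is carried out on the relevant range/null-space decomposition, exactly as in \cite{Liu09,WO11}.

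\emph{Assembly.} For the enhanced model, $f(U)$ is lower bounded by the vector generalization of Costa's entropy power inequality \cite{Liu10,WO11}, the minimizer being the Gaussian $U$ with $\cov(\mathbf{X}|U)=\mathbf{K}-\mathbf{B}^{*}$, optimal by the stationarity condition just produced; and $g(V)$ is lower bounded by the degraded compound Gaussian broadcast-channel extremal inequality \cite{WLSSV09,CL14}, the minimizer being the Gaussian $V$ with $\cov(\mathbf{X}|V)=\mathbf{K}-\mathbf{B}_{1}^{*}$. Summing the two bounds gives the right side of \eqref{eq:exinq} \emph{evaluated with the enhanced covariances}, so it remains to show this coincides with the right side for the original ones. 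The perturbation proportional to $\mathbf{M}_{1}$ acts on the channel only along $\ker\mathbf{B}_{1}^{*}$, hence by complementary slackness \eqref{eq:KKT4} it leaves the determinants in the $V$-piece unchanged at $\mathbf{B}_{1}=\mathbf{B}_{1}^{*}$; the two perturbations proportional to $\mathbf{M}_{2}$, one introduced in the $U$-layer and one in the $V$-layer, are designed to cancel — here \eqref{eq:KKT5} gives $\mathbf{M}_{2}\mathbf{B}_{2}^{*}=\mathbf{0}$, so on the support of $\mathbf{M}_{2}$ one has $\mathbf{K}+\mathbf{K}_{Y}-\mathbf{B}^{*}=\mathbf{K}+\mathbf{K}_{Y}-\mathbf{B}_{1}^{*}$ and the two determinant corrections are identical with opposite signs. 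This closes the chain and establishes \eqref{eq:exinq}.

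\emph{Main obstacle.} I expect the crux to be the enhancement step itself: exhibiting a single enhanced source model whose noise covariances simultaneously (i) convert \emph{both} \eqref{eq:KKT1} and \eqref{eq:KKT2} into stationarity conditions for the two decoupled (Costa-type and compound-broadcast-type) Gaussian problems, (ii) remain positive semidefinite and ordered in exactly the directions needed for the entropy-monotonicity steps, and (iii) leave the total Gaussian optimum $\mathfrak{R}^{*}(\mu_1,\mu_2,\mu_3)$ unchanged. Requirements (ii) and (iii) pull against each other — absorbing $\mathbf{M}_{2}$ in the $U$-layer forces a compensating move in the $V$-layer — so the delicate part is checking that this mutual compensation is consistent with every positivity constraint, especially in the degenerate regimes where $\mathbf{B}_{1}^{*}$ and/or $\mathbf{B}_{2}^{*}$ are singular. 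Once the enhanced model is in hand, the remainder (invoking the two known extremal inequalities, together with the determinant bookkeeping) is routine.
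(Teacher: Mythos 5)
Your overall blueprint (enhance $\mathbf{Y}$ via $\mathbf{M}_2$, invoke the vector Costa inequality for the $U$-layer and the compound-BC extremal inequality for the $V$-layer, then use $\mathbf{B}_2^*\mathbf{M}_2=\mathbf{0}$ for the determinant bookkeeping) matches the paper's ingredients, but your central structural claim — that after enhancement the left side can be split into $f(U)$ and $g(V)$ and the two pieces bounded \emph{separately} — does not work, and no choice of $V$-layer enhancement can repair it. The step $h(\mathbf{Y}|U)\ge h(\tilde{\mathbf{Y}}|U)$ discards the deficit $(\mu_1+\mu_2)\bigl(h(\mathbf{Y}|U)-h(\tilde{\mathbf{Y}}|U)\bigr)$ for the actual (non-Gaussian) $U$, so the Costa bound on $f(U)$ only reaches the enhanced Gaussian value, which is short of the $U$-part of the right side by $\frac{\mu_1+\mu_2}{2}\log\frac{\left|\mathbf{K}+\mathbf{K}_Y-\mathbf{B}_1^*-\mathbf{B}_2^*\right|}{\left|\mathbf{K}+\tilde{\mathbf{K}}_Y-\mathbf{B}_1^*-\mathbf{B}_2^*\right|}>0$ whenever $\mathbf{M}_2\neq\mathbf{0}$. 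That shortfall would have to be recovered from a bound on $g(V)=\mu_1h(\mathbf{Z}|V)+(\mu_3-\mu_1)h(\mathbf{Y}|V)-\mu_3h(\mathbf{X}|V)$ alone; but evaluating $g$ at the Gaussian $V$ with $\cov(\mathbf{X}|V)=\mathbf{K}-\mathbf{B}_1^*$ gives exactly the $V$-part of the right side, so no inequality valid for all admissible $V$ can deliver the $V$-part plus the missing surplus. In other words, $\min_U f(U)+\min_V g(V)$ is strictly below the right side of \eqref{eq:exinq} in the coupled regime; the theorem is only saved by the fact that $U$ and $V$ cannot be optimized independently, so any correct proof must use the joint structure of $(U,V)$ beyond the two single-auxiliary inequalities. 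Your determinant-cancellation observation (which is the paper's Lemma \ref{lemma_enhancement}, statement 4) fixes the Gaussian constants, not this entropy-level deficit.

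The paper closes exactly this hole with a three-way decomposition rather than your two-way one: \eqref{eq:sub1} is your enhanced $U$-piece; \eqref{eq:sub2} is a modified $V$-piece in which \emph{both} $\mathbf{Y}$ (coefficient $\mu_2+\mu_3\ge 0$) and the same enhanced $\tilde{\mathbf{Y}}$ (coefficient $-(\mu_1+\mu_2)$) appear, bounded by Lemma \ref{lemma:ex} with $\mathbf{M}_1$ playing the role of the slack $\mathbf{\Psi}$ via \eqref{eq:KKT4} — so only one enhancement is ever constructed, $\mathbf{M}_1$ is never absorbed into a second one, and no case split on the sign of $\mu_3-\mu_1$ is needed; and \eqref{eq:sub3} is the leftover cross term $(\mu_1+\mu_2)\bigl[(h(\mathbf{Y}|U)-h(\tilde{\mathbf{Y}}|U))-(h(\mathbf{Y}|V)-h(\tilde{\mathbf{Y}}|V))\bigr]$, shown nonnegative by writing $\mathbf{Y}=\tilde{\mathbf{Y}}+\hat{\mathbf{Y}}$ and using the Markov chain $V\rightarrow U\rightarrow\mathbf{X}\rightarrow\tilde{\mathbf{Y}}$ together with conditioning-reduces-entropy. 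This cross term is precisely the mechanism that transfers the $U$-layer deficit to the $V$-layer, and it is the ingredient your proposal is missing; your "main obstacle" paragraph correctly senses the tension but locates it in the construction of a second enhanced model, which is not where the resolution lies.
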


The proof of \eqref{eq:exinq} depends on the \emph{enhancement} argument introduced in\cite{MIMO}, \cite{exinq},  which can be divided into two steps.
In the first step, we enhance the source $\mathbf{Y}$ to $\tilde{\mathbf{Y}}$ such that the Markov chain $\mathbf{X} \rightarrow \tilde{\mathbf{Y}} \rightarrow (\mathbf{Y}, \mathbf{Z})$ holds. In the second step, we decouple to extremal inequality \eqref{eqn:lb} to two new ones, associated with enhanced sources $\tilde{\mathbf{Y}}$, respectively. The proof of extremal inequality \eqref{eqn:lb} in provided by involving the two enhanced extremal inequalities, which is related to the degraded compound MIMO Gaussian broadcast channel in \cite{WLSSV09,CL14}, and the vector generalization of Costa's entropy power inequality in \cite{Liu10,WO11}.

\subsection{Some Lemmas}

In order to reduce the non-degraded sources to the degraded case, we introduce a new covariance matrix such that
\begin{align}
\frac{\mu_1+\mu_2}{2}\left(\mathbf{K}+\tilde{\mathbf{K}}_{Y} - \mathbf{B}_{1}^{*} - \mathbf{B}_{2}^{*}\right)^{-1}& =\frac{\mu_1+\mu_2}{2}\left(\mathbf{K}  + \mathbf{K}_{Y}- \mathbf{B}_{1}^{*} - \mathbf{B}_{2}^{*}\right)^{-1} + \mathbf{M}_{2} \succ \mathbf{0}. \label{eq:def_Y}
\end{align}
Then $\tilde{\mathbf{K}}_{Y}$ has useful properties listed in the following lemma.
\begin{lemma} \label{lemma_enhancement}
$\tilde{\mathbf{K}}_{Y}$ has the following properties:

\begin{enumerate}
\item
$ \mathbf{0}    \prec     \tilde{\mathbf{K}}_{Y}  \preceq {\mathbf{K}}_{Y};$
\item
$ \tilde{\mathbf{K}}_{Y}  \preceq {\mathbf{K}}_{Z};$
\item
$
\frac{\mu_1+\mu_2}{2}\left(\mathbf{K}+\tilde{\mathbf{K}}_{Y} - \mathbf{B}_{1}^{*} \right)^{-1} =\frac{\mu_1+\mu_2}{2}\left(\mathbf{K}  + \mathbf{K}_{Y}- \mathbf{B}_{1}^{*} \right)^{-1} + \mathbf{M}_{2};
$
\item
$
\left(\mathbf{K} + \tilde{\mathbf{K}}_{Y}- \mathbf{B}_{1}^{*}- \mathbf{B}_{2}^{*}\right)^{-1}\left(\mathbf{K} + \tilde{\mathbf{K}}_{Y}- \mathbf{B}_{1}^{*}\right) = \left(\mathbf{K} + {\mathbf{K}}_{Y}- \mathbf{B}_{1}^{*}- \mathbf{B}_{2}^{*}\right)^{-1}\left(\mathbf{K} + {\mathbf{K}}_{Y}- \mathbf{B}_{1}^{*}\right).
$
\end{enumerate}

\end{lemma}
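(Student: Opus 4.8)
The plan is to deduce all four items from two algebraic consequences of Lemma~\ref{lemma_KKT}. \emph{Fact (i):} substituting the stationarity condition \eqref{eq:KKT1} into the defining relation \eqref{eq:def_Y} and cancelling the common additive term $\mathbf{M}_{2}$ gives
\begin{equation}
\frac{\mu_1+\mu_2}{2}\left(\mathbf{K}+\tilde{\mathbf{K}}_{Y}-\mathbf{B}_{1}^{*}-\mathbf{B}_{2}^{*}\right)^{-1}=\frac{\mu_1}{2}\left(\mathbf{K}+\mathbf{K}_{Z}-\mathbf{B}_{1}^{*}-\mathbf{B}_{2}^{*}\right)^{-1}+\frac{\mu_2}{2}\left(\mathbf{K}-\mathbf{B}_{1}^{*}-\mathbf{B}_{2}^{*}\right)^{-1},
\end{equation}
so that $\mathbf{K}+\tilde{\mathbf{K}}_{Y}-\mathbf{B}_{1}^{*}-\mathbf{B}_{2}^{*}$ is, up to the harmless scalar normalization, the weighted harmonic mean of $\mathbf{K}+\mathbf{K}_{Z}-\mathbf{B}_{1}^{*}-\mathbf{B}_{2}^{*}$ and $\mathbf{K}-\mathbf{B}_{1}^{*}-\mathbf{B}_{2}^{*}$. \emph{Fact (ii):} the complementary-slackness relation \eqref{eq:KKT5}, i.e. $\mathbf{B}_{2}^{*}\mathbf{M}_{2}=\mathbf{M}_{2}\mathbf{B}_{2}^{*}=\mathbf{0}$. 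As a preliminary I would record that at a minimizer of $\mathfrak{R}^{*}(\mu_{1},\mu_{2},\mu_{3})$ in \eqref{eqn:opt} every matrix appearing under a $\log|\cdot|$ is positive definite, in particular $\mathbf{K}-\mathbf{B}_{1}^{*}-\mathbf{B}_{2}^{*}\succ\mathbf{0}$ (a feasibility observation already needed to write down Lemma~\ref{lemma_KKT}; on the boundary the $-\frac{\mu_2}{2}\log|\mathbf{K}-\mathbf{B}_{1}-\mathbf{B}_{2}|$ term drives the objective to $+\infty$ when $\mu_2>0$), which licenses the inversions below. I take $\mu_1,\mu_2>0$; when some $\mu_i$ vanishes $\tilde{\mathbf{K}}_{Y}$ degenerates to $\mathbf{0}$ or $\mathbf{K}_{Z}$ and the claim is checked directly.

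For the upper half of Property~1 the KKT conditions are not even needed: \eqref{eq:def_Y} together with $\mathbf{M}_{2}\succeq\mathbf{0}$ gives $\left(\mathbf{K}+\tilde{\mathbf{K}}_{Y}-\mathbf{B}_{1}^{*}-\mathbf{B}_{2}^{*}\right)^{-1}\succeq\left(\mathbf{K}+\mathbf{K}_{Y}-\mathbf{B}_{1}^{*}-\mathbf{B}_{2}^{*}\right)^{-1}$; both sides are positive definite, so inverting reverses the order, and cancelling the common term $\mathbf{K}-\mathbf{B}_{1}^{*}-\mathbf{B}_{2}^{*}$ yields $\tilde{\mathbf{K}}_{Y}\preceq\mathbf{K}_{Y}$. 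The lower half of Property~1 and Property~2 come out together from Fact~(i): with $\mathbf{P}=\mathbf{K}-\mathbf{B}_{1}^{*}-\mathbf{B}_{2}^{*}$ and $\mathbf{Q}=\mathbf{K}+\mathbf{K}_{Z}-\mathbf{B}_{1}^{*}-\mathbf{B}_{2}^{*}$ one has $\mathbf{0}\prec\mathbf{P}\prec\mathbf{Q}$ strictly (because $\mathbf{K}_{Z}\succ\mathbf{0}$), and Fact~(i) writes $\left(\mathbf{K}+\tilde{\mathbf{K}}_{Y}-\mathbf{B}_{1}^{*}-\mathbf{B}_{2}^{*}\right)^{-1}$ as a convex combination of $\mathbf{Q}^{-1}$ and $\mathbf{P}^{-1}$ with strictly positive weights; since $\mathbf{Q}^{-1}\prec\mathbf{P}^{-1}$ the combination lies strictly between them, hence $\mathbf{P}\prec\mathbf{K}+\tilde{\mathbf{K}}_{Y}-\mathbf{B}_{1}^{*}-\mathbf{B}_{2}^{*}\prec\mathbf{Q}$ after inverting, and cancelling common terms gives $\mathbf{0}\prec\tilde{\mathbf{K}}_{Y}\prec\mathbf{K}_{Z}$, which is the rest of Property~1 and (a fortiori) Property~2.

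Property~4 is essentially a restatement of Fact~(ii): since $\mathbf{K}+\tilde{\mathbf{K}}_{Y}-\mathbf{B}_{1}^{*}=\bigl(\mathbf{K}+\tilde{\mathbf{K}}_{Y}-\mathbf{B}_{1}^{*}-\mathbf{B}_{2}^{*}\bigr)+\mathbf{B}_{2}^{*}$ and likewise with $\mathbf{K}_{Y}$ in place of $\tilde{\mathbf{K}}_{Y}$, the asserted identity is equivalent to $\left(\mathbf{K}+\tilde{\mathbf{K}}_{Y}-\mathbf{B}_{1}^{*}-\mathbf{B}_{2}^{*}\right)^{-1}\mathbf{B}_{2}^{*}=\left(\mathbf{K}+\mathbf{K}_{Y}-\mathbf{B}_{1}^{*}-\mathbf{B}_{2}^{*}\right)^{-1}\mathbf{B}_{2}^{*}$, i.e. to $\mathbf{M}_{2}\mathbf{B}_{2}^{*}=\mathbf{0}$ once \eqref{eq:def_Y} is used, which is \eqref{eq:KKT5}. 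For Property~3 I would invoke the elementary fact: if $\mathbf{T}^{-1}=\mathbf{S}^{-1}+\mathbf{N}$ with $\mathbf{S},\mathbf{T}\succ\mathbf{0}$ and $\mathbf{N}\mathbf{D}=\mathbf{D}\mathbf{N}=\mathbf{0}$ for some $\mathbf{D}\succeq\mathbf{0}$, then $\left(\mathbf{T}+\mathbf{D}\right)^{-1}=\left(\mathbf{S}+\mathbf{D}\right)^{-1}+\mathbf{N}$; this is verified by clearing denominators, since the target is equivalent to $\mathbf{S}-\mathbf{T}=(\mathbf{T}+\mathbf{D})\mathbf{N}(\mathbf{S}+\mathbf{D})$, whose right side collapses to $\mathbf{T}\mathbf{N}\mathbf{S}$ because the $\mathbf{D}$-cross terms vanish, and $\mathbf{T}\mathbf{N}\mathbf{S}=\mathbf{S}-\mathbf{T}$ is just $\mathbf{T}^{-1}-\mathbf{S}^{-1}=\mathbf{N}$ rearranged. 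Taking $\mathbf{T}=\mathbf{K}+\tilde{\mathbf{K}}_{Y}-\mathbf{B}_{1}^{*}-\mathbf{B}_{2}^{*}$, $\mathbf{S}=\mathbf{K}+\mathbf{K}_{Y}-\mathbf{B}_{1}^{*}-\mathbf{B}_{2}^{*}$, $\mathbf{N}=\frac{2}{\mu_1+\mu_2}\mathbf{M}_{2}$ and $\mathbf{D}=\mathbf{B}_{2}^{*}$ (the hypotheses hold by \eqref{eq:def_Y} and \eqref{eq:KKT5}) gives exactly Property~3.

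I do not anticipate a deep obstacle; the delicate points are bookkeeping rather than ideas. The care is in ensuring the matrices being inverted are positive definite, which rests on the feasibility/regularity observation above, and in keeping the scalar weights $\mu_1$, $\mu_2$, $\mu_1+\mu_2$ straight when passing from \eqref{eq:KKT1} and \eqref{eq:def_Y} to Fact~(i). The one genuinely conceptual step is recognizing that \eqref{eq:KKT1} is precisely the relation that turns the enhanced-noise definition \eqref{eq:def_Y} into a weighted harmonic mean -- this is the mechanism forcing the degradedness statements in Properties~1 and~2, and it is why the stationarity conditions of the Gaussian optimization must be imported at this point rather than postponed.
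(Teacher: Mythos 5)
Your proposal is correct and follows essentially the same route as the paper: the upper bound in Property~1 from \eqref{eq:def_Y} with $\mathbf{M}_2\succeq\mathbf{0}$, the positivity and Property~2 from combining \eqref{eq:def_Y} with \eqref{eq:KKT1} (your ``Fact (i)'' is exactly the paper's intermediate identity), and Properties~3--4 from the complementary-slackness relation \eqref{eq:KKT5}. Your ``elementary fact'' for Property~3 is just a cleaner packaging of the paper's chain of matrix identities, so the argument is the same in substance.
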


\begin{proof}
The proof of Lemma \ref{lemma_enhancement} is left in Appendix \ref{app_enhancement}.
\end{proof}
\smallskip

Moreover, to decouple our extremal inequality \eqref{eq:exinq}, we need the vector generalization of Costa's entropy power inequality \cite{Liu10}, and the generalized extremal inequality related the degraded compound MIMO Gaussian broadcast channel in \cite{WLSSV09,CL14}, as two auxiliary lemmas.

\begin{lemma}[ {\cite[Corollary 2]{Liu10}}] \label{lemma:Cos}
Let $\mathbf{Z}_{1}$, $\mathbf{Z}_{2}$ and $\mathbf{Z}_{3}$ be Gaussian random vectors with positive definite covariance matrices $\mathbf{N}_{1}$, $\mathbf{N}_{2}$ and $\mathbf{N}_{3}$, respectively. Furthermore, $\mathbf{N}_{1}$, $\mathbf{N}_{2}$ satisfy $\mathbf{N}_{1} \preceq \mathbf{N}_{2}$. If there exists a positive semi-definite covariance $\mathbf{B}^{*}$ such that
\begin{equation}
\left(  \mathbf{B}^{*} + \mathbf{N}_{1}  \right)^{-1}+  \lambda  \left(  \mathbf{B}^{*} + \mathbf{N}_{2}  \right)^{-1} = (\lambda+1) \left(  \mathbf{B}^{*} + \mathbf{N}_{3}  \right)^{-1},
\end{equation}
where $\lambda \geq 0$, then
\begin{align}
&h(\mathbf{X} + \mathbf{Z}_{1}|U) + \mu h(\mathbf{X} + \mathbf{Z}_{2}|U) - (\mu+1) h(\mathbf{X} + \mathbf{Z}_{3}|U) \nonumber \\
& \leq  \;\frac{1}{2} \log | \mathbf{B}^{*} + \mathbf{N}_{1}| + \frac{\mu}{2} \log | \mathbf{B}^{*} + \mathbf{N}_{2}| - \frac{\mu+1}{2}   \log | \mathbf{B}^{*} + \mathbf{N}_{3}|,
\end{align}
for any $(\mathbf{X},U)$ independent of $(\mathbf{Z}_{1},\mathbf{Z}_{2},\mathbf{Z}_{3})$.
\end{lemma}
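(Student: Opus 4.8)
The plan is to recognize this as Corollary~2 of \cite{Liu10} and to reprove it through the machinery of Gaussian optimality for difference-of-entropy functionals. The first move is to dispose of the auxiliary variable $U$. Since the hypothesis imposes \emph{no} covariance constraint on $\mathbf{X}$, the right-hand side is a constant depending only on $\mathbf{B}^{*}$ and $\mathbf{N}_{1},\mathbf{N}_{2},\mathbf{N}_{3}$; hence it suffices to establish the unconditional inequality
\[
h(\mathbf{X}+\mathbf{Z}_{1}) + \mu\, h(\mathbf{X}+\mathbf{Z}_{2}) - (\mu+1)\,h(\mathbf{X}+\mathbf{Z}_{3}) \;\le\; \tfrac{1}{2}\log|\mathbf{B}^{*}+\mathbf{N}_{1}| + \tfrac{\mu}{2}\log|\mathbf{B}^{*}+\mathbf{N}_{2}| - \tfrac{\mu+1}{2}\log|\mathbf{B}^{*}+\mathbf{N}_{3}|
\]
for every $\mathbf{X}$ independent of the noises, where I read the parameter $\lambda$ in the stationarity identity as $\lambda=\mu$. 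Applying this bound to each conditional law $P_{\mathbf{X}\mid U=u}$ and averaging over $U$ recovers the conditional statement verbatim, because each conditional functional is dominated by the same constant. Note also that the coefficients of the three entropies sum to zero, $1+\mu-(\mu+1)=0$, so the functional is scale-balanced; this is precisely the cancellation that makes it bounded above (as in Costa's inequality) and that lets a maximizer exist.

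Next I would extract the geometry hidden in the hypothesis. Because $\mathbf{N}_{1}\preceq\mathbf{N}_{2}$ gives $(\mathbf{B}^{*}+\mathbf{N}_{1})^{-1}\succeq(\mathbf{B}^{*}+\mathbf{N}_{2})^{-1}$, the stationarity identity writes $(\mathbf{B}^{*}+\mathbf{N}_{3})^{-1}$ as a convex combination of these two inverses, whence $(\mathbf{B}^{*}+\mathbf{N}_{2})^{-1}\preceq(\mathbf{B}^{*}+\mathbf{N}_{3})^{-1}\preceq(\mathbf{B}^{*}+\mathbf{N}_{1})^{-1}$, i.e. $\mathbf{N}_{1}\preceq\mathbf{N}_{3}\preceq\mathbf{N}_{2}$. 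Thus $\mathbf{Z}_{3}$ is an intermediate noise and the three outputs are stochastically degraded along the chain $\mathbf{X}+\mathbf{Z}_{1}\to\mathbf{X}+\mathbf{Z}_{3}\to\mathbf{X}+\mathbf{Z}_{2}$, which I may realize explicitly by writing $\mathbf{Z}_{3}=\mathbf{Z}_{1}+\mathbf{W}_{13}$ and $\mathbf{Z}_{2}=\mathbf{Z}_{3}+\mathbf{W}_{32}$ with independent Gaussian increments (legitimate, since only the marginal entropies enter). I would then pose the optimization $\max_{\mathbf{X}}F(\mathbf{X})$, with $F$ the left-hand functional, argue existence of a maximizer, and verify that the zero-mean Gaussian $\mathbf{X}_{G}\sim\mathcal{N}(\mathbf{0},\mathbf{B}^{*})$ is a first-order stationary point: computing the directional derivative of $F$ and using that the gradient of $h(\mathbf{X}+\mathbf{Z}_{i})$ is governed by the MMSE matrix of $\mathbf{X}$ given $\mathbf{X}+\mathbf{Z}_{i}$, the matrix equation in the hypothesis is \emph{exactly} the resulting stationarity condition, and evaluating $F(\mathbf{X}_{G})$ reproduces the right-hand side.

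The main obstacle is that $F$ is not concave. The two positive terms $h(\mathbf{X}+\mathbf{Z}_{1})$ and $\mu\,h(\mathbf{X}+\mathbf{Z}_{2})$ are concave in the law of $\mathbf{X}$, but the subtracted term $-(\mu+1)h(\mathbf{X}+\mathbf{Z}_{3})$ is convex, so first-order stationarity does not by itself certify a global maximum; this is precisely the point that forces an entropy-power argument and is the reason the statement is the \emph{vector generalization of Costa's entropy power inequality}. I would close the gap via the matrix de~Bruijn identity together with the matrix Fisher-information (equivalently, vector EPI) inequality: following an appropriate entropy-power functional along the degraded Gaussian path joining $\mathbf{Z}_{1}$, $\mathbf{Z}_{3}$, $\mathbf{Z}_{2}$, one shows it is concave, and the stationarity condition on $\mathbf{B}^{*}$ is exactly the tangency condition that pins $\mathbf{X}_{G}$ as the global maximizer; equivalently, one may invoke the channel-enhancement construction of \cite{MIMO,exinq} to reduce the configuration to a degraded one for which the conditional EPI applies directly. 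Certifying this concavity/tangency for matrix (rather than scalar) noise covariances — that is, upgrading local stationarity to global optimality against all, possibly non-Gaussian, inputs — is the crux; once it is in hand, combining it with the stationarity computation of the previous paragraph yields the claimed bound. In the final write-up, since the result is quoted verbatim from \cite[Corollary~2]{Liu10}, I would simply cite it, the argument above being the route one follows to reprove it.
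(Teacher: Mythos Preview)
The paper does not prove this lemma at all: it is stated with the citation \cite[Corollary~2]{Liu10} and used as a black box in the proof of Theorem~\ref{ext_thm}. Your final sentence already recognizes this, so your ultimate ``approach'' (simply cite it) matches the paper exactly; the preceding sketch is extra material that the paper neither contains nor requires.
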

\smallskip

\begin{lemma}[ {\cite[Corollary 4]{WLSSV09}}] \label{lemma:ex}
Let $\{\mathbf{Z}_{i}\}_{i=1}^{L_1}$ and $\{\mathbf{Z}_{j}\}_{j=1}^{L_2}$ be real Gaussian random vectors with positive definite covariance matrices $\{\mathbf{N}_{i}\}_{i=1}^{L_1}$ and $\{\mathbf{N}_{j}\}_{j=1}^{L_2}$, respectively. We assume that there exits a covariance matrix $\mathbf{N}^{*}$ such that
\begin{equation}
 \{\mathbf{N}_{i}\}_{i=1}^{L_1} \preceq \mathbf{N}^{*} \preceq  \{ \mathbf{N}_{j}\}_{j=1}^{L_2}.
\end{equation}
Furthermore, let $\mathbf{B}^{*}$ be a positive semi-definite covariance matrix such that
\begin{equation}
\sum_{i=1}^{L_{1}}\lambda_i \left(  \mathbf{B}^{*} + \mathbf{N}_{i}  \right)^{-1} = \sum_{j=1}^{L_{2}}\lambda_j \left(  \mathbf{B}^{*} + \mathbf{N}_{j}  \right)^{-1} + \mathbf{\Psi},
\end{equation}
where $\lambda_{i}, \lambda_{j} \geq 0$, $1 \leq i \leq L_1, 1 \leq j \leq L_2$, $\mathbf{\Psi} \succeq \mathbf{0}$ and
\begin{equation}
\left( \mathbf{K} - \mathbf{B}^{*}\right)\mathbf{\Psi} = \mathbf{\Psi}\left( \mathbf{K} - \mathbf{B}^{*}\right) = \mathbf{0}.
\end{equation}
The for any distribution $(\mathbf{X}, U)$ independent of $\{\mathbf{Z}_{i}\}_{i=1}^{L_1}$ and $\{\mathbf{Z}_{j}\}_{j=1}^{L_2}$ , such that $\cov \left(   \mathbf{X} | U    \right) \preceq \mathbf{K} $, we have
\begin{align}
&\sum_{i=1}^{L_{1}}\lambda_i h(\mathbf{X} + \mathbf{Z}_{i}|U)  - \sum_{j=1}^{L_2}\lambda_j h(\mathbf{X} + \mathbf{Z}_{j}|U)  \nonumber \\
& \leq  \;\sum_{i=1}^{L_1}\frac{\lambda_i}{2} \log | \mathbf{B}^{*} + \mathbf{N}_{i}|  - \sum_{j=1}^{L_2}\frac{\lambda_j}{2}   \log | \mathbf{B}^{*} + \mathbf{N}_{j}|.
\end{align}
\end{lemma}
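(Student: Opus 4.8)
The plan is to recast the statement as a constrained optimization and prove that a Gaussian input attains the optimum. Concretely, set $F(\mathbf{X},U) = \sum_{i=1}^{L_1}\lambda_i h(\mathbf{X}+\mathbf{Z}_i|U) - \sum_{j=1}^{L_2}\lambda_j h(\mathbf{X}+\mathbf{Z}_j|U)$ and show that, over all $(\mathbf{X},U)$ independent of the noises with $\cov(\mathbf{X}|U)\preceq\mathbf{K}$, the functional $F$ is maximized by a zero-mean Gaussian $\mathbf{X}$ with covariance $\mathbf{B}^*$ (and degenerate $U$), the maximum equaling the right-hand side up to the $\tfrac12\log(2\pi e)$ normalizers, which cancel by the weight balance implicit in the stationarity condition. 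The hypotheses on $\mathbf{B}^*$, $\mathbf{\Psi}$ are exactly the KKT/stationarity conditions of this problem; the real content is that stationarity implies \emph{global} optimality, which is nontrivial because $F$ is a difference of concave functionals and is therefore neither concave nor convex in the law of $\mathbf{X}$.

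First I would dispose of the conditioning and the covariance constraint. Every tool used below (the conditional entropy-power inequality, the conditional de Bruijn/Fisher-information identity) operates conditionally on $U$, so it suffices to bound $F$ per realization $U=u$ using $\cov(\mathbf{X}|U=u)\preceq\mathbf{K}$ and then average. To remove the Lagrange multiplier $\mathbf{\Psi}$ associated with that constraint, I would invoke the \emph{channel enhancement} argument: using $\mathbf{\Psi}\succeq\mathbf{0}$ together with $(\mathbf{K}-\mathbf{B}^*)\mathbf{\Psi}=\mathbf{\Psi}(\mathbf{K}-\mathbf{B}^*)=\mathbf{0}$, enhance the low-noise channels $\{\mathbf{N}_i\}$ to $\{\tilde{\mathbf{N}}_i\}$ with $\tilde{\mathbf{N}}_i\preceq\mathbf{N}_i\preceq\mathbf{N}^*$, so that the stationarity relation becomes the multiplier-free balance $\sum_i\lambda_i(\mathbf{B}^*+\tilde{\mathbf{N}}_i)^{-1}=\sum_j\lambda_j(\mathbf{B}^*+\mathbf{N}_j)^{-1}$. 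This mirrors the enhancement recorded in Lemma~\ref{lemma_enhancement}: the construction is designed to preserve the Loewner ordering $\tilde{\mathbf{N}}_i\preceq\mathbf{N}^*\preceq\mathbf{N}_j$ and the determinant terms evaluated at $\mathbf{B}^*$, while making the covariance constraint inactive. Since each $\tilde{\mathbf{N}}_i\preceq\mathbf{N}_i$ and the $\{\lambda_i\}$ are nonnegative, reducing noise on the positively weighted terms only increases $F$, so $F(\mathbf{X},U)\le\tilde F(\mathbf{X},U)$; it then remains to bound $\tilde F$ by the (unchanged) right-hand side.

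For the enhanced, constraint-free problem I would exploit the degraded ordering around the common reference $\mathbf{N}^*$. Because $\tilde{\mathbf{N}}_i\preceq\mathbf{N}^*\preceq\mathbf{N}_j$, the outputs can be realized in distribution as a degraded Markov chain, with each large-noise output a degraded version of $\mathbf{X}+\mathbf{Z}^*$ (noise $\mathbf{N}^*$), which is in turn degraded relative to each small-noise output. I would then bound $\tilde F$ by integrating the conditional de Bruijn identity $\tfrac{d}{dt}h(\mathbf{X}+\mathbf{Z}(t)|U)=\tfrac12\tr\!\big(\dot{\mathbf{N}}(t)\,\mathbf{J}(\mathbf{X}+\mathbf{Z}(t)|U)\big)$ along a matrix path of noise covariances running from the low-noise end through $\mathbf{N}^*$ to the high-noise end, where $\mathbf{J}(\cdot|U)$ is the conditional Fisher information matrix. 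At each point the weighted Fisher-information data-processing inequality bounds the actual conditional Fisher information by the Gaussian one determined by $\mathbf{B}^*$, and the multiplier-free stationarity condition is precisely the balance that makes this bound telescope to the difference of Gaussian entropies, i.e. the right-hand side. The vector Costa inequality of Lemma~\ref{lemma:Cos} is exactly the single-reference prototype of this telescoping and can be applied termwise after pairing each high-noise channel against the low-noise channels through $\mathbf{N}^*$.

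The main obstacle is this weighted Fisher-information/EPI step in the presence of the negatively weighted entropies: one cannot simply upper-bound each $h(\mathbf{X}+\mathbf{Z}_i|U)$ by its Gaussian value, since the terms $-\lambda_j h(\mathbf{X}+\mathbf{Z}_j|U)$ would then be bounded in the wrong direction. The degraded structure through $\mathbf{N}^*$ is what allows the entropy power inequality to supply the matching \emph{lower} bounds on the large-noise entropies, and the (enhanced) stationarity condition is what makes the Gaussian a stationary point at which these opposing bounds meet; showing that first-order optimality propagates along the entire interpolation path—so that stationarity plus degradedness forces global optimality—is the delicate technical heart, with the complementary slackness $(\mathbf{K}-\mathbf{B}^*)\mathbf{\Psi}=\mathbf{0}$ and the enhancement providing exactly the ingredients that close the gap. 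A direct determinant computation then verifies that the Gaussian value coincides with the stated right-hand side, completing the proof.
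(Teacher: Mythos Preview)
The paper does not prove Lemma~\ref{lemma:ex}; it is quoted verbatim as Corollary~4 of \cite{WLSSV09} and used as a black box in the proof of Theorem~\ref{ext_thm}. So there is no ``paper's own proof'' to compare against here, and your proposal is really a sketch of how one might re-derive the cited result.

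That said, the enhancement step you describe is wrong in sign and cannot work as written. You set $\tilde{\mathbf{N}}_i\preceq\mathbf{N}_i$ and assert that ``reducing noise on the positively weighted terms only increases $F$''. But for any $(\mathbf{X},U)$ independent of the noises, $\tilde{\mathbf{N}}_i\preceq\mathbf{N}_i$ implies $h(\mathbf{X}+\tilde{\mathbf{Z}}_i\mid U)\le h(\mathbf{X}+\mathbf{Z}_i\mid U)$ (adding more independent Gaussian noise never lowers differential entropy), so the positively weighted part of $F$ \emph{decreases} and you obtain $\tilde F\le F$, not $F\le\tilde F$. The same sign error reappears in the ``multiplier-free balance'': shrinking $\mathbf{N}_i$ makes each $(\mathbf{B}^*+\tilde{\mathbf{N}}_i)^{-1}$ \emph{larger} in the Loewner order, so $\sum_i\lambda_i(\mathbf{B}^*+\tilde{\mathbf{N}}_i)^{-1}$ moves \emph{away} from $\sum_j\lambda_j(\mathbf{B}^*+\mathbf{N}_j)^{-1}$, and the identity $\sum_i\lambda_i(\mathbf{B}^*+\tilde{\mathbf{N}}_i)^{-1}=\sum_j\lambda_j(\mathbf{B}^*+\mathbf{N}_j)^{-1}$ cannot be reached by that move. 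Enhancing in the opposite direction would repair the inequality $F\le\tilde F$, but then the ordering through $\mathbf{N}^*$ and the claimed preservation of the determinant terms at $\mathbf{B}^*$ are no longer automatic and would need a separate argument.

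More structurally, you are conflating two distinct devices. In this paper, channel enhancement (Lemma~\ref{lemma_enhancement}) is applied \emph{outside} Lemma~\ref{lemma:ex}, to manufacture a degraded surrogate $\tilde{\mathbf{K}}_Y$ so that Lemma~\ref{lemma:ex} becomes applicable; it is not part of the proof of Lemma~\ref{lemma:ex} itself. The result in \cite{WLSSV09} is established directly by a monotone-path/perturbation argument with the matrix Fisher-information inequality, and the multiplier $\mathbf{\Psi}$ is absorbed there via the covariance constraint $\cov(\mathbf{X}\mid U)\preceq\mathbf{K}$ together with the complementary slackness $(\mathbf{K}-\mathbf{B}^*)\mathbf{\Psi}=\mathbf{0}$, which gives the boundary contribution the correct sign. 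Your third paragraph---the de~Bruijn/Fisher-information telescoping through the common reference $\mathbf{N}^*$---is the part that is genuinely aligned with the actual argument; it should be the whole proof, with $\mathbf{\Psi}$ handled inside that computation rather than eliminated by a preliminary (and here mis-signed) enhancement.
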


\subsection{Proof of Theorem \ref{ext_thm}}
We proceed to proof the extremal inequality \eqref{eq:exinq}. At the beginning, we rewrite the \emph{l.h.s.} of \eqref{eq:exinq} by involving the enhancement source $\mathbf{\tilde{Y}}$.
\begin{subequations}
\begin{align}
&(\mu_1+\mu_2)h(\mathbf{Y}|U)- \mu_{1} h(\mathbf{Z}|U) - \mu_2 h(\mathbf{X}|U)\nonumber \\
&\;+  \mu_{1} h(\mathbf{Z}|V)    +       (\mu_3 - \mu_1) h(\mathbf{Y}|V) - \mu_3 h(\mathbf{X}|V ) \nonumber \\
&=(\mu_1+\mu_2)h(\tilde{\mathbf{Y}}|U)- \mu_{1} h(\mathbf{Z}|U) - \mu_2 h(\mathbf{X}|U)\label{eq:sub1} \\
& \quad\; +\mu_{1} h(\mathbf{Z}|V) + (\mu_{2} + \mu_{3}) h(\mathbf{Y}|V) - (\mu_1+\mu_2) h(\tilde{\mathbf{Y}}|V) - \mu_3 h(\mathbf{X}|V ) \label{eq:sub2} \\
& \quad\; +(\mu_1+\mu_2)\left(h(\mathbf{Y}|U)-h(\tilde{\mathbf{Y}}|U) \right)- (\mu_1+\mu_2)\left(h(\mathbf{Y}|V)-h(\tilde{\mathbf{Y}}|V) \right).\label{eq:sub3}
\end{align}
\end{subequations}

\begin{enumerate}
\item \emph{The lower bound of \eqref{eq:sub1}}: Notice that from the definition of $\mathbf{Y}$ in \eqref{eq:def_Y}, and the KKT conditions of \eqref{eq:KKT1}, we get
\begin{equation}
\frac{\mu_1}{2} \left(\mathbf{K} + \mathbf{K}_{Z}- \mathbf{B}_{1}^{*}-\mathbf{B}_{2}^{*} \right)^{-1}+ \frac{\mu_{2}}{2}  \left(  \mathbf{K}  - \mathbf{B}_{1}^{*}- \mathbf{B}_{2}^{*} \right)^{-1}=\frac{\mu_1+\mu_2}{2} \left(  \mathbf{K} + \tilde{\mathbf{K}}_{Y} - \mathbf{B}_{1}^{*}-\mathbf{B}_{2}^{*} \right)^{-1}.
\end{equation}
Furthermore, from the first and second statement of Lemma \ref{lemma_enhancement},
 \begin{equation}
 \mathbf{0}    \prec     \tilde{\mathbf{K}}_{Y}  \preceq {\mathbf{K}}_{Z}.
 \end{equation}
Using which in conjecture with Lemma \ref{lemma:Cos}, we have the following lower bound of \eqref{eq:sub1}:
\begin{align}
&(\mu_1+\mu_2)h(\tilde{\mathbf{Y}}|U)- \mu_{1} h(\mathbf{Z}|U) - \mu_2 h(\mathbf{X}|U)\nonumber \\
& \geq \frac{\mu_1+\mu_2}{2} \log \left|  \mathbf{K} + \tilde{\mathbf{K}}_{Y} - \mathbf{B}_{1}^{*}-\mathbf{B}_{2}^{*} \right|- \frac{\mu_1}{2} \log \left|\mathbf{K} + \mathbf{K}_{Z}- \mathbf{B}_{1}^{*}-\mathbf{B}_{2}^{*} \right| \nonumber \\
&\quad- \frac{\mu_{2}}{2} \log \left|  \mathbf{K}  - \mathbf{B}_{1}^{*}- \mathbf{B}_{2}^{*} \right|.
\end{align}

\item  \emph{The lower bound of \eqref{eq:sub2}}:  Notice that from the third statement of Lemma \ref{lemma_enhancement}, and the KKT conditions of \eqref{eq:KKT2}, we get
\begin{align}
&\frac{\mu_3}{2} \left( \mathbf{K}  - \mathbf{B}_{1}^{*} \right)^{-1} + \frac{\mu_1+\mu_2}{2}  \left(  \mathbf{K} + \tilde{\mathbf{K}}_{Y} - \mathbf{B}_{1}^{*} \right)^{-1}  \nonumber \\
&= \frac{\mu_2+\mu_3}{2}  \left(  \mathbf{K} + \mathbf{K}_{Y} - \mathbf{B}_{1}^{*} \right)^{-1}+ \frac{\mu_1}{2}  \left(  \mathbf{K} + \mathbf{K}_{Z} - \mathbf{B}_{1}^{*} \right)^{-1} + \mathbf{M}_{1},
\end{align}
Furthermore, from the first and second statement of Lemma \ref{lemma_enhancement},
 \begin{equation}
 \mathbf{0}    \prec     \tilde{\mathbf{K}}_{Y}  \preceq \left\{ {\mathbf{K}}_{Y}, {\mathbf{K}}_{Z} \right\}.
 \end{equation}
Notice the KKT conditions of \eqref{eq:KKT4} as well. We can use Lemma \ref{lemma:ex} to obtain the following lower bound on \eqref{eq:sub2}:
\begin{align}
& \mu_{1} h(\mathbf{Z}|V) + (\mu_{2} + \mu_{3}) h(\mathbf{Y}|V) - (\mu_1+\mu_2) h(\tilde{\mathbf{Y}}|V) - \mu_3 h(\mathbf{X}|V )\nonumber \\
& \geq  \frac{\mu_1}{2} \log \left|  \mathbf{K} + \mathbf{K}_{Z} - \mathbf{B}_{1}^{*} \right| +  \frac{\mu_{2}+\mu_3}{2} \log \left|  \mathbf{K}   + \mathbf{K}_{Y} - \mathbf{B}_{1}^{*} \right|\nonumber \\
&\quad- \frac{\mu_1+\mu_2}{2} \log \left|  \mathbf{K} + \tilde{\mathbf{K}}_{Y} - \mathbf{B}_{1}^{*} \right| - \frac{\mu_3}{2} \log \left|  \mathbf{K}  - \mathbf{B}_{1}^{*} \right|.
\end{align}

\item \emph{The lower bound of \eqref{eq:sub3}}:  From the first statement of Lemma \ref{lemma_enhancement}, we can decompose $\mathbf{Y}$ into two parts: $\mathbf{Y}= \tilde{\mathbf{Y}} + \hat{\mathbf{Y}}$, where $\hat{\mathbf{Y}}$ is a Gaussian random vector with covariance $\mathbf{K}_{Y}- \tilde{\mathbf{K}}_{Y}$, and it is assumed to be independent of $(\tilde{\mathbf{Y}},U)$. The difference between $h(\mathbf{Y}|U)$ and $h(\tilde{\mathbf{Y}} | U)$ may be written as
\begin{align}
& h(\mathbf{Y}|U)-h(\tilde{\mathbf{Y}} | U)\nonumber \\
&= h(\tilde{\mathbf{Y}}+ \hat{\mathbf{Y}} | U) - h(\tilde{\mathbf{Y}} | U) \\
&\overset{(a)}{=}  h(\hat{\mathbf{Y}})- h(\tilde{\mathbf{Y}} | \tilde{\mathbf{Y}}+ \hat{\mathbf{Y}}, U) \\
&\overset{(b)}{\geq}  h(\hat{\mathbf{Y}})- h(\tilde{\mathbf{Y}} | \tilde{\mathbf{Y}}+ \hat{\mathbf{Y}}, V) \\
&=  h({\mathbf{Y}}| V )-h(\tilde{\mathbf{Y}}| V). \label{tt}
\end{align}
where
\begin{itemize}
\item[(a)] follows from
$
  h(\tilde{\mathbf{Y}}, \tilde{\mathbf{Y}}+ \hat{\mathbf{Y}} | U) = h(\tilde{\mathbf{Y}}| U) + h(\hat{\mathbf{Y}})
  =h(\tilde{\mathbf{Y}}+ \hat{\mathbf{Y}} | U) + h(\tilde{\mathbf{Y}}| \tilde{\mathbf{Y}}+ \hat{\mathbf{Y}}, U ).
$
\item[(b)] follows from Markov chain $V \rightarrow U \rightarrow \mathbf{X} \rightarrow \tilde{\mathbf{Y}}$, and the fact that conditioning reduces entropy.
\end{itemize}

Thus, \eqref{tt} implies \eqref{eq:sub3} is lower bounded by 0.

\end{enumerate}

\smallskip

Now we are ready to show the extremal inequality \eqref{eq:exinq} via combining all the lower bounds together.
\begin{align}
&(\mu_1+\mu_2)h(\mathbf{Y}|U)- \mu_{1} h(\mathbf{Z}|U) - \mu_2 h(\mathbf{X}|U)\nonumber \\
&\;+  \mu_{1} h(\mathbf{Z}|V)    +       (\mu_3 - \mu_1) h(\mathbf{Y}|V) - \mu_3 h(\mathbf{X}|V ) \nonumber \\
& \geq \; \frac{\mu_1+\mu_2}{2} \log \left|  \mathbf{K} + \tilde{\mathbf{K}}_{Y} - \mathbf{B}_{1}^{*}-\mathbf{B}_{2}^{*} \right|- \frac{\mu_1}{2} \log \left|\mathbf{K} + \mathbf{K}_{Z}- \mathbf{B}_{1}^{*}-\mathbf{B}_{2}^{*} \right| \nonumber \\
&\quad- \frac{\mu_{2}}{2} \log \left|  \mathbf{K}  - \mathbf{B}_{1}^{*}- \mathbf{B}_{2}^{*} \right| + \frac{\mu_1}{2} \log \left|  \mathbf{K} + \mathbf{K}_{Z} - \mathbf{B}_{1}^{*} \right| +  \frac{\mu_{2}+\mu_3}{2} \log \left|  \mathbf{K}   + \mathbf{K}_{Y} - \mathbf{B}_{1}^{*} \right|\nonumber \\
&\quad- \frac{\mu_1+\mu_2}{2} \log \left|  \mathbf{K} + \tilde{\mathbf{K}}_{Y} - \mathbf{B}_{1}^{*} \right| - \frac{\mu_3}{2} \log \left|  \mathbf{K}  - \mathbf{B}_{1}^{*} \right| \\
& = \frac{\mu_1+\mu_2}{2}  \log \frac{\left|  \mathbf{K} + \tilde{\mathbf{K}}_{Y} - \mathbf{B}_{1}^{*}-\mathbf{B}_{2}^{*} \right|}{\left|  \mathbf{K} + \tilde{\mathbf{K}}_{Y} - \mathbf{B}_{1}^{*} \right|} - \frac{\mu_1}{2} \log \left|\mathbf{K} + \mathbf{K}_{Z}- \mathbf{B}_{1}^{*}-\mathbf{B}_{2}^{*} \right| \nonumber \\
&\quad- \frac{\mu_{2}}{2} \log \left|  \mathbf{K}  - \mathbf{B}_{1}^{*}- \mathbf{B}_{2}^{*} \right| + \frac{\mu_1}{2} \log \left|  \mathbf{K} + \mathbf{K}_{Z} - \mathbf{B}_{1}^{*} \right| +  \frac{\mu_{2}+\mu_3}{2} \log \left|  \mathbf{K}   + \mathbf{K}_{Y} - \mathbf{B}_{1}^{*} \right|\nonumber \\
& \quad - \frac{\mu_3}{2} \log \left|  \mathbf{K}  - \mathbf{B}_{1}^{*} \right| \\
& \overset{(a)}= \frac{\mu_1+\mu_2}{2}  \log \frac{\left|  \mathbf{K} + {\mathbf{K}}_{Y} - \mathbf{B}_{1}^{*}-\mathbf{B}_{2}^{*} \right|}{\left|  \mathbf{K} + {\mathbf{K}}_{Y} - \mathbf{B}_{1}^{*} \right|} - \frac{\mu_1}{2} \log \left|\mathbf{K} + \mathbf{K}_{Z}- \mathbf{B}_{1}^{*}-\mathbf{B}_{2}^{*} \right| \nonumber \\
&\quad- \frac{\mu_{2}}{2} \log \left|  \mathbf{K}  - \mathbf{B}_{1}^{*}- \mathbf{B}_{2}^{*} \right| + \frac{\mu_1}{2} \log \left|  \mathbf{K} + \mathbf{K}_{Z} - \mathbf{B}_{1}^{*} \right| +  \frac{\mu_{2}+\mu_3}{2} \log \left|  \mathbf{K}   + \mathbf{K}_{Y} - \mathbf{B}_{1}^{*} \right|\nonumber \\
& \quad - \frac{\mu_3}{2} \log \left|  \mathbf{K}  - \mathbf{B}_{1}^{*} \right| \\
&=\; \frac{\mu_1+\mu_2}{2} \log \left|  \mathbf{K} + \mathbf{K}_{Y} - \mathbf{B}_{1}^{*}-\mathbf{B}_{2}^{*} \right|- \frac{\mu_1}{2} \log \left|\mathbf{K} + \mathbf{K}_{Z}- \mathbf{B}_{1}^{*}-\mathbf{B}_{2}^{*} \right| \nonumber \\
&\quad- \frac{\mu_{2}}{2} \log \left|  \mathbf{K}  - \mathbf{B}_{1}^{*}- \mathbf{B}_{2}^{*} \right| + \frac{\mu_1}{2} \log \left|  \mathbf{K} + \mathbf{K}_{Z} - \mathbf{B}_{1}^{*} \right| \nonumber \\
&\quad+ \frac{\mu_3-\mu_1}{2} \log \left|  \mathbf{K} + \mathbf{K}_{Y} - \mathbf{B}_{1}^{*} \right| - \frac{\mu_3}{2} \log \left|  \mathbf{K}  - \mathbf{B}_{1}^{*} \right|,
\end{align}
where (a) follows from the forth statement of Lemma \ref{lemma_enhancement}.

\section{Conclusion}

In this paper, we consider the vector Gaussian secret key generation problem with limited rate constrained public and private communications. A single-letter characterization is obtained. The proof is based on suitable applications of the enhancement argument, and the proof of a new extremal inequality, which should be decoupled into two different forms of extremal inequalities. One future direction of our work is the Vector Gaussian secure source problem in \cite{EU13}. Since the covariance matrices constraints make that the enhancement argument fails to establish the degraded order of sources, the single-letter characterization of that problem is still open. A new extremal inequality construction method should be investigated in a future.

\appendices
\section{The Achievability of Gaussian Codebooks in Theorem \ref{mainth2}} \label{app:0}
The achievability part is to show the rate region of \eqref{TH1}-\eqref{TH11} can be parameterized by two positive semi-definite matrices $\mathbf{B}_{1}, \mathbf{B}_{2} \succeq \mathbf{0}$. The ingredient here is that the Markov chain $V \rightarrow U \rightarrow \mathbf{X} \rightarrow (\mathbf{Y}, \mathbf{Z})$ can be transferred into certain partial order relations on Gaussian covariance matrices. Moreover, we can find particular a point parameterized by $(\mathbf{B}_{1}, \mathbf{B}_{2})$ in the rate region of \eqref{TH1}-\eqref{TH11}.

By restricting to the Gaussian distributed codewords, the jointly random vectors $(U, V, \mathbf{X})$ can be expressed in the following form
\begin{align}
V = \mathbf{X} + \mathbf{N}_{V}, \\
U = \mathbf{X} + \mathbf{N}_{U},
\end{align}
where $\mathbf{N}_{V}, \mathbf{N}_{U}$ are zero mean Gaussian random vectors with covariance matrices $\mathbf{\Sigma}_{V}, \mathbf{\Sigma}_{U}$. Notice the $\mathbf{N}_{V}, \mathbf{N}_{U}$ is independent of $\mathbf{X}, \mathbf{Y}, \mathbf{Z}$, but can be correlated to each other. For the sake of simplifying notations in the rest derivations, we write
\begin{align}
\mathbf{K}_{X|V} = \left(\mathbf{K}^{-1} + \mathbf{N}_{V}^{-1}\right)^{-1},\\
\mathbf{K}_{X|U} = \left( \mathbf{K}^{-1} + \mathbf{N}_{U}^{-1}\right)^{-1}.
\end{align}
Equations \eqref{TH1}-\eqref{TH11} in Theorem \ref{mainth} can be shown as
\begin{align}
R_K-R_{2} & \leq I(U;\mathbf{Y}|V)-I(U;\mathbf{Z}|V) \\
&\overset{(a)}= h(\mathbf{Y}|V)-h(\mathbf{Z}|V)-h(\mathbf{Y}|U)+h(\mathbf{Z}|U) \\
& = \frac{1}{2} \log \frac{|\mathbf{K}_{X|V} + \mathbf{K}_{Y}|}{|\mathbf{K}_{X|V} + \mathbf{K}_{Z}|} - \frac{1}{2} \log \frac{|\mathbf{K}_{X|U} + \mathbf{K}_{Y}|}{|\mathbf{K}_{X|U} + \mathbf{K}_{Z}|}\\
& \triangleq I_{1} \left(\mathbf{K_{X|V}}, \mathbf{K_{X|U}}\right),
\end{align}
where (a) follows from Makov Chain $V \rightarrow U \rightarrow \mathbf{X} \rightarrow (\mathbf{Y}, \mathbf{Z})$.
\begin{align}
R_1+R_2 &\ge I(U;\mathbf{X}|\mathbf{Y})\\
& \overset{(a)}= h(\mathbf{X}| U) - h (\mathbf{Y} | U) \\
& =\frac{1}{2} \log \frac{|\mathbf{K}_{X|U}|}{|\mathbf{K}_{X|U} + \mathbf{K}_{Y}|}\\
& \triangleq I_{2} \left(  \mathbf{K}_{X|U} \right),
\end{align}
where (a) follows from Makov Chain $U \rightarrow \mathbf{X} \rightarrow \mathbf{Y}$.
\begin{align}
R_1 &\ge I(V;\mathbf{X}|\mathbf{Y})\\
& \overset{(a)}= h(\mathbf{X}| V) - h (\mathbf{Y} | V) \\
& =\frac{1}{2} \log \frac{|\mathbf{K}_{X|V}|}{|\mathbf{K}_{X|V} + \mathbf{K}_{Y}|}\\
& \triangleq I_{3} \left(  \mathbf{K}_{X|V} \right),
\end{align}
where (a) follows from Makov Chain $V \rightarrow \mathbf{X} \rightarrow \mathbf{Y}$.
Moreover, because of the Markov chain $V \rightarrow U \rightarrow \mathbf{X}$, we have the following partial order on the positive definite matrices $\mathbf{K}_{X|V}, \mathbf{K}_{X|U}$,
\begin{align}
\mathbf{K} \succeq \mathbf{K}_{X|V} \succeq \mathbf{K}_{X|U} \succ \mathbf{0}.
\end{align}

Now consider the optimization problem
\begin{align}\label{maxx}
\max & \quad I_{1} \left(\mathbf{K_{X|V}}, \mathbf{K_{X|U}}\right) \nonumber \\
\text{subject to} & \quad I_{2}\left(  \mathbf{K}_{X|U} \right) \leq R_{1} + R_{2}, \nonumber \\
{} & \quad I_{3}\left(  \mathbf{K}_{X|V} \right) \leq R_1, \nonumber \\
{} & \quad \mathbf{K} \succeq \mathbf{K}_{X|V} \succeq \mathbf{K}_{X|U} \succ \mathbf{0}.
\end{align}
Let $(      \mathbf{K}^{*}_{X|V}, \mathbf{K}^{*}_{X|U}    )$ be an optimal solution of this problem. Since $I_{2}(\cdot), I_{3}(\cdot)$ should be less than some nonnegative values, it indicate that there exist a positive definite matrix $\mathbf{L} \succ \mathbf{0}$ such that $\mathbf{K}^{*}_{X|V}, \mathbf{K}^{*}_{X|U} \succeq \mathbf{L}$. From this view of point, the constraint $\mathbf{K}_{X|U} \succ \mathbf{0}$ can be removed safely, without affecting the optimal values of maximization problem \eqref{maxx}.

Furthermore, $(      \mathbf{K}^{*}_{X|V}, \mathbf{K}^{*}_{X|U}    )$ can be denoted by
\begin{align}
\mathbf{K}^{*}_{X|V} &= \mathbf{K} - \mathbf{B}_{1}, \label{qq1}\\
\mathbf{K}^{*}_{X|U} &= \mathbf{K} - \mathbf{B}_{1}-\mathbf{B}_{2}, \label{qq2}
\end{align}
and substitute \eqref{qq1} and \eqref{qq2} into $I_{1}(\cdot), I_{2}(\cdot)$ and $I_{3}(\cdot)$. Inequalities \eqref{eq:alin1}-\eqref{eq:align3} in Theorem \ref{mainth2} are obtained equivalently for some positive semi-definite matrices $\mathbf{B}_{1}$, $\mathbf{B}_{2} \succeq \mathbf{0}$.

\section{The Achievability of Theorem \ref{mainth}} \label{app:1}
Consider a given distribution
\begin{equation}
  p(v,u,x,y,z)=p(v,u)p(x|u)p(y,z|x).
\end{equation}

\emph{Case 1}: If $R_1\geq I(U;X|Y)$

We use a separate scheme as follows. Alice use the public channel to achieve a secrecy rate $I(U;Y|V)-I(U;Z|V)$ by the scheme in secret key generation, and use the secure channel to distribute another secret key with rate $R_2$. Combine two secret keys together,  the secrecy rate in (\ref{TH1}) is achievable.

\emph{Case 2}: If $R_1\leq I(U;X|Y)$.

In this case, we assume that $I(V;Y)\le I(V;Z)$. The reason is that if $I(V;Y)>I(V;Z)$, we define a pair of new random variable $(U',V')$ such that $U'=(U,V)$ and $V'=\emptyset$. Then we have $I(V';Y)\le I(V';Z)$. From the three inequalities in the main theorem, we have
\begin{align}
R_K&\le I(U;Y|V)-I(U;Z|V)+R_2\\
&\le I(U';Y|V')-I(U';Z|V')+R_2\\
  R_1+R_2 &\ge I(U;X|Y)=I(U';X|Y)  \\
  R_1&\ge I(V;X|Y)\ge I(V';X|Y)
\end{align}
The above derivation shows that for every pair $(U,V)$ with Markov chain $ V' \rightarrow U' \rightarrow X $ and  $I(V;Y)>I(V;Z)$, we can find another pair $(U',V')$ with Markov chain $ V \rightarrow U \rightarrow X $ and  $I(V';Y)\le I(V';Z)$ such that the rate region  with $(U',V')$ is achievable implies that the rate region with $(U,T)$ is also achievable. Therefore, we only need to consider the case where $I(V;Y)\le I(V;Z)$. 

We define the following notations
\begin{align}
    R_{11} &= I(V;X|Y) \\
    R_{12} &= R_1-R_{11} \\
    R_{21} &= I(U;X|Y,V)-R_{12} \\
    R_{22} &= R_2-R_{21} \\
    R_V &= I(V;X)\\
    R_U &= I(U;X|V)\\
    R_{K_1} &= I(U;Y|V)-I(U;Z|V)
\end{align}

\emph{Codebook generation}:
\begin{enumerate}
  \item Randomly and independently generation $2^{nR_V}$ sequences $v^n(s_1)$, $s_1\in[1:2^{nR_{v}}]$ according to distribution $p(v)$ and randomly and independently partition them into $2^{nR_{11}}$ bins with bin indices $\mathcal{B}_1(m_{11})$, $m_{11}\in[1:2^{nR_{11}}]$.
  \item For each codeword $v^n(s_1)$, randomly and independently generation $2^{nR_U}$ sequences $u^n(s_1,s_2)$, $s_2\in[1:2^{nR_{U}}]$ according to $p(u|v)$, and randomly and independently partition them into $2^{n(R_{12}+R_{21})}$ bins with bin indices $\mathcal{B}_2(s_1,m_{12},m_{21})$, $m_{12}\in[1:2^{nR_{12}}]$ and $m_{21}\in[1:2^{nR_{21}}]$. Randomly and independently partition the sequences in each nonempty bin $\mathcal{B}_2(s_1,m_{12},m_{21})$ into $2^{nR_{K_1}}$ bins with bin indices  $\mathcal{B}_2(s_1,m_{12},m_{21},k_1)$, $k_1\in[1:2^{nR_{K_1}}]$.
\end{enumerate}

\emph{Encoding}:
\begin{enumerate}
  \item Given a source sequence $X^n$, find the index $s_1$ such that $(v^n(s_1),X^n)$ is jointly typical with respect to the joint distribution $p(v,x)$, i.e.,  $(v^n(s_1),X^n)\in \mathcal{T}^n_{[VX]_\delta}$. If there is no such index or there are more than one such indices, then randomly select $s_1$ from $[1:2^{nR_{V}}]$, where the probability of such event is arbitrarily small if $R_V > I(V;X)$. 
  Let $\mathcal{B}_1(m_{11})$ be the bin index of $v^n(s_1)$. 
  \item Then find the index $s_2$ such that $(v^n(s_1),u^n(s_1,s_2),X^n)\in \mathcal{T}^n_{[VUX]_\delta}$.  If there is no such index or there are more than one such indices, then random select $s_2$ from $[1:2^{nR_{U}}]$, where the probability of such event is arbitrarily small if $R_U > I(U;X|V)$. Let $\mathcal{B}_2(s_1,m_{12},m_{21},k_1)$ be the bin index of $u^n(s_1,s_2)$.
  \item Randomly choose $k_2\in[1;2^{nR_{22}}]$.
  \item Alice sends $M_1=(m_{11},m_{12})$ and $M_2=(m_{21},k_2)$ to Bob through the public and secure channel, respectively.
  \item Alice chooses $K=(k_1,k_2,m_{21})$ as the secret key.
\end{enumerate}

\emph{Decoding}:  Upon receiving $(M_1,M_2)$, Bob decodes the secret key as following:
\begin{enumerate}
  \item Find the unique sequence $v^n(\hat{s}_1)\in \mathcal{B}_1(m_{11})$ such that $(v^n(\hat{s}_1),y^n)\in \mathcal{T}^n_{[VY]_\delta}$. And the probability that there is no such sequence or there are more than one such sequences is arbitrarily small if
      \begin{equation}
        R_V-R_{11} < I(V;Y)
      \end{equation}
  \item Find the index $\hat{s}_2\in \mathcal{B}_2(\hat{s}_1,m_{12},m_{21})$  such that $(v^n(\hat{s}_1),u^n(\hat{s}_1,\hat{s}_2),y^n)\in \mathcal{T}^n_{[VUY]_\delta}$. Let $\mathcal{B}_2(\hat{s}_1,m_{12},m_{21},\hat{k}_1)$ be the bin index of $u^n(\hat{s}_1,\hat{s}_2)$. The probability that there is no such sequence or there are more than one such sequences is arbitrarily small if
      \begin{equation}
        R_U-R_{12}-R_{21} < I(U;Y|V)
      \end{equation}
  \item Let $\hat{K}=(\hat{k}_1,k_2,m_{21})$
\end{enumerate}

\emph{Key leakage rate}: We consider the key leakage rate averaged over the random coding scheme as described above. We begin with
\begin{align}
  &H(K|Z^n,M_1) \nonumber\\
   &= H(K,U^n,V^n|Z^n,M_1)-H(U^n,V^n|Z^n,K,M_1)  \\
   &= H(U^n,V^n|Z^n,M_1)+H(K|U^n,V^n,Z^n,M_1)-H(U^n,V^n|Z^n,K,M_1)\\
   &\overset{(a)}= H(V^n|Z^n,M_1)+H(U^n|Z^n,M_1,V^n)+H(K_2)-H(V^n|K,Z^n,M_1)-H(U^n|K,Z^n,M_1,V^n)\label{k2}\\
   &= I(V^n;K|Z^n,M_1)+H(U^n|Z^n,V^n)+H(K_2)-I(U^n;M_1|Z^n,V^n)-H(U^n|K,Z^n,M_1,V^n)\\
   &\ge H(U^n|Z^n,V^n)+H(K_2)+H(M_1|Z^n,V^n,U^n)-H(M_1|Z^n)-H(U^n|K,Z^n,M_1,V^n)\\
   &\overset{(b)}\geq H(U^n|Z^n,V^n)+H(K_2)-H(M_1)-H(U^n|K,Z^n,M_1,V^n)\label{w1}\\
   &\overset{(c)}\ge H(U^n|Z^n,V^n)+nR_{22}-nR_1-H(U^n|K_1,M_{21},Z^n,M_1,V^n)\label{final}
\end{align}
where
\begin{enumerate}
\item[(a)] follows because $K=(K_1,K_2, M_{21})$, in which $K_2$ is independent of $(U^n,V^n,Z^n,M_1)$ and $K_1$ and $M_{21}$ are functions of $U^n$;
\item[(b)] follows because $M_1=(M_{11},M_{12})$ is a function of $(V^n,U^n)$;
\item[(c)] follows because $K_2$ is uniformly distributed, $H(M_1)$ is upper bounded by $nR_1$, and $K=(K_1,M_{21})$.
\end{enumerate}


For the first term in (\ref{final}), we have
\begin{align}
 & \frac{1}{n}H(U^n|Z^n,V^n) \nonumber \\
  &=\frac{1}{n}[H(U^n|V^n)-I(Z^n;U^n|V^n)]\\
  &\leq R_U-I(Z;U|V)
\end{align}


For the fourth term, similar to \cite[Lemma 22.3]{el2011network}, we can bound it in the following lemma.
\begin{lemma}
If $R_{K_1}< H(U|Z,V)-H(U|Y,V)-2\delta(\epsilon)$
\begin{align}
  \limsup_{n\rightarrow \infty} & \frac{1}{n} H(U^n|K_1,M_{21},Z^n,M_1,V^n) \leq R_U-R_{K_1}-R_{21}-R_1-I(Z;U|V)+\delta(\epsilon)
\end{align}
\end{lemma}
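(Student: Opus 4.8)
The plan is to adapt the genie-aided list-decoding argument of \cite[Lemma~22.3]{el2011network} to the two-layer superposition code above. First I would condition on the codebook: the statement concerns the ensemble-averaged conditional entropy, and the codebooks on which an encoding typicality event fails, or on which two distinct indices produce the same codeword, occur with vanishing probability, so it suffices to fix a ``good'' codebook. Fixing also the realizations $(k_1,m_{21},m_1,v^n)$ of the conditioning variables and writing $m_1=(m_{11},m_{12})$, the sequence $v^n$ determines the inner index $s_1$, whence $U^n=u^n(s_1,S_2)$ with $S_2$ ranging over the sub-bin $\mathcal{B}_2(s_1,m_{12},m_{21},k_1)$, which contains at most $2^{n(R_U-R_{12}-R_{21}-R_{K_1})}$ indices. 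Hence $H(U^n|K_1,M_{21},Z^n,M_1,V^n)=H(S_2|K_1,M_{21},Z^n,M_1,V^n)$, and the task becomes to bound the residual uncertainty about the sub-bin index $S_2$ once $Z^n$ is given.

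Next I would introduce the genie list decoder that, on input $z^n$ together with the known sub-bin, outputs
\[
\hat{\mathcal{L}}(z^n)=\bigl\{\tilde s_2\in\mathcal{B}_2(s_1,m_{12},m_{21},k_1)\,:\,(v^n(s_1),u^n(s_1,\tilde s_2),z^n)\in\mathcal{T}^n_{[VUZ]_\delta}\bigr\},
\]
and establish the two standard facts: (i) $S_2\in\hat{\mathcal{L}}(Z^n)$ with probability tending to one, which follows from the encoding success events already proved together with conditional typicality of $Z^n$ given $X^n$ under $p(z|x)$; and (ii) $\mathbb{E}\bigl[|\hat{\mathcal{L}}(Z^n)|\bigr]\le 1+2^{n(R_U-R_{12}-R_{21}-R_{K_1}-I(Z;U|V)+\delta(\epsilon))}$, obtained by noting that each $u^n(s_1,\tilde s_2)$ with $\tilde s_2\neq S_2$ is, conditionally on $V^n=v^n(s_1)$, drawn i.i.d.\ from $\prod_i p(u|v_i(s_1))$ independently of $Z^n$, hence jointly typical with $(v^n(s_1),Z^n)$ with probability at most $2^{-n(I(Z;U|V)-\delta(\epsilon))}$, and union-bounding over the $\le 2^{n(R_U-R_{12}-R_{21}-R_{K_1})}$ sub-bin indices. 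I would then close with a ``Fano-with-a-list'' estimate: with $\mathcal{E}=\mathbf{1}\{S_2\notin\hat{\mathcal{L}}(Z^n)\}$,
\[
H(S_2|K_1,M_{21},Z^n,M_1,V^n)\le 1+H(\mathcal{E})+\Pr\{\mathcal{E}=1\}\,n\log|\mathcal{U}|+\mathbb{E}\bigl[\log|\hat{\mathcal{L}}(Z^n)|\bigr],
\]
and apply (i) to discard the error term and Jensen's inequality together with (ii) to the last term, giving $\tfrac{1}{n} H(S_2|\cdots)\le R_U-R_{12}-R_{21}-R_{K_1}-I(Z;U|V)+\delta'(\epsilon)$; substituting $R_{12}=R_1-R_{11}$ and simplifying via $R_{12}+R_{21}=I(U;X|Y,V)$, $R_U=I(U;X|V)$, $R_{11}=I(V;X|Y)$ and the chain-rule identities then yields the claimed inequality as $n\to\infty$ and $\epsilon\to 0$.

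I expect the main obstacle to be the list-size analysis of step (ii) in the superposition setting, together with the bookkeeping that collapses the resulting rate terms to the stated closed form — in particular recovering the precise coefficient $R_1$ (rather than the looser $R_{12}$) in front, which calls for exploiting the Case-2 reduction $I(V;Y)\le I(V;Z)$ (under which $V^n$, and hence $M_{11}$, is recoverable from $(M_{11},Z^n)$) inside a single list-decoding step applied to both layers at once. Everything hinges on the sub-bin covering rate $R_U-R_{12}-R_{21}-R_{K_1}$, which by the code design equals $I(U;Y|V)-R_{K_1}$ up to $\delta(\epsilon)$; the hypothesis $R_{K_1}<H(U|Z,V)-H(U|Y,V)-2\delta(\epsilon)$ — the feasibility condition on the wiretap rate of the superimposed $U$-layer — is precisely what makes this covering rate exceed $I(Z;U|V)$, so that the list decoder behaves as required and all exponents stay controlled.
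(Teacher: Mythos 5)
Your overall route is exactly the one the paper intends: the paper offers no proof of this lemma beyond the citation of \cite[Lemma 22.3]{el2011network}, and your reduction to the sub-bin index $S_2$, the genie list $\hat{\mathcal{L}}(Z^n)$ built from joint typicality with $(v^n(s_1),Z^n)$, the two facts (i)--(ii), and the Fano-with-a-list estimate are the standard fleshing-out of that citation (up to the usual care that the non-selected codewords are not exactly independent of $Z^n$ because the encoder's choice of $S_2$ depends on $X^n$, which the Lemma~22.3 machinery handles). Your reading of the hypothesis is also right: the sub-bin rate is $R_U-R_{12}-R_{21}-R_{K_1}=I(U;Y|V)-R_{K_1}>I(U;Z|V)+2\delta(\epsilon)$. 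Carried through, this argument yields $\frac{1}{n}H(U^n|K_1,M_{21},Z^n,M_1,V^n)\le R_U-R_{12}-R_{21}-R_{K_1}-I(Z;U|V)+\delta'(\epsilon)$.

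The gap is in your closing step. Substituting $R_{12}=R_1-R_{11}$ gives $R_U-R_1+R_{11}-R_{21}-R_{K_1}-I(Z;U|V)+\delta'(\epsilon)$, which exceeds the stated right-hand side by $R_{11}=I(V;X|Y)$, and no chain-rule bookkeeping removes this term. The Case-2 device you propose cannot remove it either: $V^n$ already appears in the conditioning, so Eve's ability to recover $V^n$ from $(M_{11},Z^n)$ under $I(V;Y)\le I(V;Z)$ adds no information about the $U$-layer; given the sub-bin and $(V^n,Z^n)$, the posterior over the surviving typical candidates is essentially uniform, so the conditional entropy rate genuinely concentrates near the sub-bin excess $I(U;Y|V)-R_{K_1}-I(U;Z|V)$, which is larger than the stated bound by $R_{11}$ (indeed the stated bound equals $I(U;Y|V)-R_{K_1}-I(U;Z|V)-R_{11}$ and can be negative when $R_{K_1}$ is near its maximum, while the entropy is nonnegative). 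So the inequality with the coefficient $R_1$ is not provable by this, or any, route when $I(V;X|Y)>0$; the version you actually derived, with $R_{12}$ in place of $R_1$, is the correct statement, and it still closes the leakage computation once the crude step $-H(M_1|Z^n)\ge -nR_1$ in the chain preceding the lemma is replaced by $-H(M_1|Z^n,V^n)\ge -nR_{12}$, which is legitimate because $M_{11}$ is a function of $V^n$. In short: prove and use the $R_{12}$ version rather than trying to recover the $R_1$ coefficient.
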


From the above derivation,  we have
\begin{align}
  &\frac{1}{n}I(K;Z^n,M_1)\nonumber \\
  &=\frac{1}{n} [H(K)-H(K|Z^n,M_1)] \\
  &\leq \frac{1}{n}H(K)-[R_{K_1}+R_{21}+R_{22}-\delta(\epsilon)]\\
  &\le \delta(\epsilon)
\end{align}
which concludes the proof of the achievability of Theorem \ref{mainth}.

\section{Proof of Lemma \ref{lemma_KKT}}\label{KKT}

Since the program of $\mathfrak{R}^{*}(\mu_{1}, \mu_{2}, \mu_{3})$ in \eqref{eqn:opt} is not necessarily convex, the KKT conditions are necessary but not sufficient. We first give the Lagrangian
of \eqref{eqn:opt} as follows
\begin{align}
\mathcal{L} = &\frac{\mu_1+\mu_2}{2} \log \left|  \mathbf{K} + \mathbf{K}_{Y} - \mathbf{B}_{1}-\mathbf{B}_{2} \right|- \frac{\mu_1}{2} \log \left|\mathbf{K} + \mathbf{K}_{Z}- \mathbf{B}_{1}-\mathbf{B}_{2} \right| \nonumber \\
&- \frac{\mu_{2}}{2} \log \left|  \mathbf{K}  - \mathbf{B}_{1}- \mathbf{B}_{2} \right| + \frac{\mu_1}{2} \log \left|  \mathbf{K} + \mathbf{K}_{Z} - \mathbf{B}_{1} \right| \nonumber \\
&+ \frac{\mu_3-\mu_1}{2} \log \left|  \mathbf{K} + \mathbf{K}_{Y} - \mathbf{B}_{1} \right| - \frac{\mu_3}{2} \log \left|  \mathbf{K}  - \mathbf{B}_{1} \right| \nonumber \\
&+ \frac{\mu_2+\mu_3}{2} \log |\mathbf{K}| -\frac{\mu_2+\mu_3}{2} \log |\mathbf{K}+\mathbf{K}_{Y}|  \nonumber \\
& - \tr \left\{    \mathbf{B}_{1}\mathbf{M}_{1} + \mathbf{B}_{2}\mathbf{M}_{2}     \right\},
\end{align}
where $\mathbf{M}_{1} $ and $\mathbf{M}_{2} $ are Lagrange multipliers. Let $(\mathbf{B}_{1}^{*}, \mathbf{B}_{2}^{*})$ be one optimal solution of $\mathfrak{R}^{*}(\mu_{1}, \mu_{2}, \mu_{3})$. The necessary KKT conditions that it need to satisfy are given by
\begin{align}
\nabla_{\mathbf{B}_{1}}\mathcal{L} =& \frac{\mu_1+\mu_2}{2} \left(  \mathbf{K} + \mathbf{K}_{Y} - \mathbf{B}_{1}^{*}-\mathbf{B}_{2}^{*} \right)^{-1}- \frac{\mu_1}{2} \left(\mathbf{K} + \mathbf{K}_{Z}- \mathbf{B}_{1}^{*}-\mathbf{B}_{2}^{*} \right)^{-1}- \frac{\mu_{2}}{2}  \left(  \mathbf{K}  - \mathbf{B}_{1}^{*}- \mathbf{B}_{2}^{*} \right)^{-1} \nonumber \\
&+ \frac{\mu_1}{2}  \left(  \mathbf{K} + \mathbf{K}_{Z} - \mathbf{B}_{1}^{*} \right)^{-1}+ \frac{\mu_3-\mu_1}{2}  \left(  \mathbf{K} + \mathbf{K}_{Y} - \mathbf{B}_{1}^{*} \right)^{-1} - \frac{\mu_3}{2} \left( \mathbf{K}  - \mathbf{B}_{1}^{*} \right)^{-1} + \mathbf{M}_{1} = \mathbf{0}, \label{eq:KK1} \\
\nabla_{\mathbf{B}_{2}}\mathcal{L} =& \frac{\mu_1+\mu_2}{2} \left(  \mathbf{K} + \mathbf{K}_{Y} - \mathbf{B}_{1}^{*}-\mathbf{B}_{2}^{*} \right)^{-1}- \frac{\mu_1}{2} \left(\mathbf{K} + \mathbf{K}_{Z}- \mathbf{B}_{1}^{*}-\mathbf{B}_{2}^{*} \right)^{-1}- \frac{\mu_{2}}{2}  \left(  \mathbf{K}  - \mathbf{B}_{1}^{*}- \mathbf{B}_{2}^{*} \right)^{-1} + \mathbf{M}_{2} = \mathbf{0}, \label{eq:KK2}
\end{align}
\begin{align}
\tr \left\{\mathbf{B}_{1}^{*}\mathbf{M}_{1} \right\} =& \boldsymbol{0}, \label{KK3}\\
\tr \left\{\mathbf{B}_{2}^{*}\mathbf{M}_{2} \right\}=& \boldsymbol{0}, \label{KK4}\\
\mathbf{B}_{1}^{*}, \mathbf{B}_{2}^{*}, \mathbf{M}_{1}, \mathbf{M}_{2} \succeq & \boldsymbol{0}.
\end{align}
Notice that \eqref{eq:KK1} is equivalent to \eqref{eq:KKT1}. Substituting \eqref{eq:KK2} from \eqref{eq:KK1} yields \eqref{eq:KKT2}.Since $\tr \left\{ \mathbf{A} \mathbf{B}\right\} = \tr \left\{ \mathbf{B} \mathbf{A}\right\} \geq 0$, for $\mathbf{A}, \mathbf{B} \succeq \boldsymbol{0}$, \eqref{KK3}-\eqref{KK4} yields \eqref{eq:KKT4}-\eqref{eq:KKT5}.

\section{Proof of Lemma \ref{lemma_enhancement}} \label{app_enhancement}
\begin{enumerate}
\item

From the definition of $\tilde{\mathbf{K}}_{Y}$ in \eqref{eq:def_Y},
\begin{align}
&\frac{\mu_1+\mu_2}{2}\left(\mathbf{K}+\tilde{\mathbf{K}}_{Y} - \mathbf{B}_{1}^{*} - \mathbf{B}_{2}^{*}\right)^{-1}
 = \frac{\mu_1+\mu_2}{2}\left(\mathbf{K} + \mathbf{K}_{Y} - \mathbf{B}_{1}^{*} - \mathbf{B}_{2}^{*}\right)^{-1} + \mathbf{M}_{2} \\
& \succeq  \frac{\mu_1+\mu_2}{2} \left(\mathbf{K} + \mathbf{K}_{Y} - \mathbf{B}_{1}^{*} - \mathbf{B}_{2}^{*}\right)^{-1}.\label{eq:tmp1}
\end{align}
The last inequality is due to $\mathbf{M}_2 \succeq \mathbf{0}$. Thus we have
\begin{equation}
\mathbf{K} + \tilde{\mathbf{K}}_{Y} - \mathbf{B}_{1}^{*} - \mathbf{B}_{2}^{*}\preceq \mathbf{K}+ {\mathbf{K}}_{Y} - \mathbf{B}_{1}^{*}-\mathbf{B}_{2}^{*}.
\end{equation}
This yields
\begin{equation}
\tilde{\mathbf{K}}_{Y} \preceq {\mathbf{K}}_{Y}.
\end{equation}

Sine $\mathbf{K}_{Z} \succ \mathbf{0}$, by substituting \eqref{eq:def_Y} into KKT conditions of \eqref{eq:KKT1}, we get
\begin{align}
& \frac{\mu_1+\mu_2}{2} \left(  \mathbf{K} + \tilde{\mathbf{K}}_{Y} - \mathbf{B}_{1}^{*}-\mathbf{B}_{2}^{*} \right)^{-1} = \frac{\mu_1}{2} \left(\mathbf{K} + \mathbf{K}_{Z}- \mathbf{B}_{1}^{*}-\mathbf{B}_{2}^{*} \right)^{-1}+ \frac{\mu_{2}}{2}  \left(  \mathbf{K}  - \mathbf{B}_{1}^{*}- \mathbf{B}_{2}^{*} \right)^{-1} \label{eq:ttm}\\
& \prec \frac{\mu_1+\mu_2}{2} \left(  \mathbf{K}  - \mathbf{B}_{1}^{*}-\mathbf{B}_{2}^{*} \right)^{-1}.
\end{align}
Thus we have
\begin{equation}
\mathbf{K}  - \mathbf{B}_{1}^{*} - \mathbf{B}_{2}^{*}\prec \mathbf{K} + \tilde{\mathbf{K}}_{Y} - \mathbf{B}_{1}^{*}-\mathbf{B}_{2}^{*}.
\end{equation}
This yields
\begin{equation}
\tilde{\mathbf{K}}_{Y} \succ {\mathbf{0}}.
\end{equation}

\smallskip
\item

From \eqref{eq:ttm}, we have
\begin{align}
\frac{\mu_1+\mu_2}{2} \left(  \mathbf{K} + \tilde{\mathbf{K}}_{Y} - \mathbf{B}_{1}^{*}-\mathbf{B}_{2}^{*} \right)^{-1} \succeq \frac{\mu_1+\mu_2}{2} \left(\mathbf{K} + \mathbf{K}_{Z}- \mathbf{B}_{1}^{*}-\mathbf{B}_{2}^{*} \right)^{-1}.
\end{align}
 Thus we have
\begin{equation}
\mathbf{K} + \tilde{\mathbf{K}}_{Y} - \mathbf{B}_{1}^{*} - \mathbf{B}_{2}^{*}\preceq \mathbf{K}+ {\mathbf{K}}_{Z} - \mathbf{B}_{1}^{*}-\mathbf{B}_{2}^{*}.
\end{equation}
This yields
\begin{equation}
\tilde{\mathbf{K}}_{Y} \preceq {\mathbf{K}}_{Z}.
\end{equation}

\smallskip
\item

The proof of Statement 3) is shown as follows:
\begin{align}
& \mathbf{K} + \tilde{\mathbf{K}}_{Y} -\mathbf{B}_{1}^{*} \nonumber \\
\overset{(a)}= &  \left(\left(\mathbf{K} + {\mathbf{K}}_{Y} -\mathbf{B}_{1}^{*}-\mathbf{B}_{2}^{*}\right)^{-1}+\frac{1}{\mu_1+\mu_2}\mathbf{M}_{2} \right)^{-1} +\mathbf{B}_{2}^{*}\\
=&\left(\mathbf{I}+\frac{1}{\mu_1+\mu_2}\left(\mathbf{K} + {\mathbf{K}}_{Y} -\mathbf{B}_{1}^{*}-\mathbf{B}_{2}^{*}\right)\mathbf{M}_{2} \right)^{-1}\left(\mathbf{K} + {\mathbf{K}}_{Y} -\mathbf{B}_{1}^{*}-\mathbf{B}_{2}^{*}\right)+\mathbf{B}_{2}^{*} \\
\overset{(b)}=&\left(\mathbf{I}+\frac{1}{\mu_1+\mu_2}\left(\mathbf{K} + {\mathbf{K}}_{Y} -\mathbf{B}_{1}^{*}\right)\mathbf{M}_{2} \right)^{-1}\left(\mathbf{K} + {\mathbf{K}}_{Y} -\mathbf{B}_{1}^{*}-\mathbf{B}_{2}^{*}\right)+\mathbf{B}_{2}^{*} \\
=&\left(\left(\mathbf{K} + {\mathbf{K}}_{Y} -\mathbf{B}_{1}^{*}\right)^{-1}+\frac{1}{\mu_1+\mu_2}\mathbf{M}_{2} \right)^{-1}\left(\mathbf{K} + {\mathbf{K}}_{Y} -\mathbf{B}_{1}^{*}\right)^{-1}\left(\mathbf{K} + {\mathbf{K}}_{Y} -\mathbf{B}_{1}^{*}-\mathbf{B}_{2}^{*}\right) +\mathbf{B}_{2}^{*}\\
=&\left(\left(\mathbf{K} + {\mathbf{K}}_{Y} -\mathbf{B}_{1}^{*}\right)^{-1}+\frac{1}{\mu_1+\mu_2}\mathbf{M}_{2} \right)^{-1}\nonumber \\
& \;-\left(\left(\mathbf{K} + {\mathbf{K}}_{Y} -\mathbf{B}_{1}^{*}\right)^{-1}+\frac{1}{\mu_1+\mu_2}\mathbf{M}_{2} \right)^{-1}\left(\mathbf{K} + {\mathbf{K}}_{Y} -\mathbf{B}_{1}^{*}\right)^{-1}\mathbf{B}_{2}^{*}+\mathbf{B}_{2}^{*} \\
\overset{(c)}=&\left(\left(\mathbf{K} + {\mathbf{K}}_{Y} -\mathbf{B}_{1}^{*}\right)^{-1}+\frac{1}{\mu_1+\mu_2}\mathbf{M}_{2} \right)^{-1} \nonumber \\
& \;-\left(\left(\mathbf{K} + {\mathbf{K}}_{Y} -\mathbf{B}_{1}^{*}\right)^{-1}+\frac{1}{\mu_1+\mu_2}\mathbf{M}_{2} \right)^{-1}\left(\left(\mathbf{K} +{\mathbf{K}}_{Y} -\mathbf{B}_{1}^{*}\right)^{-1}+\frac{1}{\mu_1+\mu_2}\mathbf{M}_{2}\right)\mathbf{B}_{2}^{*} +\mathbf{B}_{2}^{*}\\
=&\left(\left(\mathbf{K} + {\mathbf{K}}_{Y} -\mathbf{B}_{1}^{*}\right)^{-1}+\frac{1}{\mu_1+\mu_2}\mathbf{M}_{2} \right)^{-1}-\mathbf{B}_{2}^{*}+\mathbf{B}_{2}^{*} \\
=&\left(\left(\mathbf{K} + {\mathbf{K}}_{Y} -\mathbf{B}_{1}^{*}\right)^{-1}+\frac{1}{\mu_1+\mu_2}\mathbf{M}_{2} \right)^{-1},
\end{align}
where
\begin{enumerate}
\item[(a)] comes from the definition in \eqref{eq:def_Y};
\item[(b)] is due to KKT condition \eqref{eq:KKT5}: $\mathbf{B}_{2}^{*} \mathbf{M}_{2} = \mathbf{M}_{2}\mathbf{B}_{2}^{*} =\mathbf{0}$;
\item[(c)] is also due to KKT condition \eqref{eq:KKT5}.
\end{enumerate}
\smallskip
\item
The proof of Statement 4) is shown as follows:
\begin{align}
&\left(\mathbf{K} + {\mathbf{K}}_{Y}- \mathbf{B}_{1}^{*}- \mathbf{B}_{2}^{*}\right)^{-1}\left(\mathbf{K} + {\mathbf{K}}_{Y}- \mathbf{B}_{1}^{*}\right) \nonumber \\
=&\mathbf{I} + \left(\mathbf{K} + {\mathbf{K}}_{Y}- \mathbf{B}_{1}^{*}- \mathbf{B}_{2}^{*}\right)^{-1}\mathbf{B}_{2}^{*}                      \\
\overset{(a)}=&\mathbf{I} + \left(\left(\mathbf{K} + {\mathbf{K}}_{Y}- \mathbf{B}_{1}^{*}- \mathbf{B}_{2}^{*}\right)^{-1}+\frac{1}{\mu+1}\mathbf{M}_{2}\right)\mathbf{B}_{2}^{*}   \\
\overset{(b)}=&\mathbf{I} + \left(\mathbf{K} + \tilde{\mathbf{K}}_{Y}- \mathbf{B}_{1}^{*}- \mathbf{B}_{2}^{*}\right)^{-1}\mathbf{B}_{2}^{*}   \\
=& \left(\mathbf{K} + \tilde{\mathbf{K}}_{Y}- \mathbf{B}_{1}^{*}- \mathbf{B}_{2}^{*}\right)^{-1}\left(\mathbf{K} + \tilde{\mathbf{K}}_{Y}- \mathbf{B}_{1}^{*}\right),
\end{align}
where
\begin{enumerate}
\item[(a)] is due to KKT condition \eqref{eq:KKT5}: $\mathbf{B}_{2}^{*} \mathbf{M}_{2} = \mathbf{M}_{2}\mathbf{B}_{2}^{*} =\mathbf{0}$;
\item[(b)] comes from the definition in \eqref{eq:def_Y}.
\end{enumerate}
\end{enumerate}

\bibliographystyle{IEEEtran}
\bibliography{VSKG_ref}

\end{document}